\newtheorem{thm}{Theorem}
\newtheorem{dfn}{Definition}
\newtheorem{lem}{Lemma}
\newtheorem{prop}{Proposition}
\newtheorem{cor}{Corollary}
\def\min#1{\mathbb{R}^{#1,1}}
\def\minc#1{\mathbb{C}^{#1,1}}
\def\poin#1{I\hspace{-.1em}O_{#1,1}}
\def\so#1{S\hspace{-.1em}O_{#1}}
\def\F{{}^\sharp F}
\def\G{{}^\sharp G}
\def\del#1{\dfrac{\partial}{\partial #1}}
\def\os{\mathscr{M}}
\title[Integrability of Cohomogeneity-one strings]
      {Complete Integrability of Cohomogeneity-one strings in $\mathbb{R}^{n,1}$
      and Canonical Form of Killing Vector Algebra}
\author[D. Ida]{Daisuke Ida} 
\address{Department of Physics, Gakushuin University, Tokyo 171-8588, Japan}
\begin{document}
\maketitle

\begin{abstract}
  The equation of motion for comohogeneity-one Nambu-Goto strings
  in flat space $\mathbb{R}^{n,1}$ has been investigated.
  We first classify possible forms of the Killing vector fields
  in $\mathbb{R}^{n,1}$ after appropriate action of the Poincar\'e group.
  Then, all possible forms of the  Hamiltonian
  for the cohomogeneity-one Nambu-Goto strings are determined.
  It has been shown that the system always has the
  maximum number of functionally independent,
  pair-wise commuting conserved quantities,
  i.e. it is completely integrable.

  We have also determined all the possible coordinate forms of the 
  Killing vector basis for the 2-dimensional non-commutative Lie algebra.
\end{abstract}
\section{Introduction}

According to the standard models of elementary particles,
the quantum vacuum of certain scalar field depends on
the environmental temperature.
As a result, the scalar field
suffers the phase transition with the cosmic temperature dropping.
Such a phase transition does not occur uniformly in the universe,
but different phases of quantum vacua form a domain structure.
Hence, cosmic membranes and strings unavoidably appear at the borders
of locally homogeneous regions as topological defects in
quantum-vacuum configuration.
These topological defects would play an important role in the formation of
various inhomogeneous structures in the universe.
With this background, the dynamics of relativistic extended objects
such as strings or membranes has been investigated
from various aspects.

In a certain situations, the dynamics of relativistic strings or membranes can
approximately be  described by the Nambu-Goto action, which is the main subject of the present paper.
The Nambu-Goto strings and membranes are just the zero-mean-curvature timelike surface in the spacetime, so that
they might be interesting objects of study also from mathematical viewpoint.

Since it is  a  system with the infinite degrees of freedom,
the dynamics of the Nambu-Goto string or membrane is much more complicated
as compared with the particle dynamics.
However, the simplification of the equation of motion
occurs when the background spacetime has a symmetry,
and the string or membrane respects it
\cite{FSZH89,KF08,HI18}.
In particular, when the spacetime admits a Killing vector field, we can consider
Nambu-Goto strings those world-sheets are invariant under the action of the
isometry.
Such  strings are called cohomogeneity-one strings.

It is well-known that general form of the 
Nambu-Goto strings in flat spacetime $\min{n}$ are just given by
 the $n+1$ harmonic functions.
This is because the equation of motion for the string is reduced to
$n+1$ linear wave equations when  isothermal coordinates
are taken as the world-sheet coordinates.
This approach however would not be suitable for finding the cohomogeneity-one
strings.

Our purpose here is to understand the entity of the cohomogeneity-one
strings in $\min{n}$.
Since there are maximum number of Killing vector fields in $\min{n}$,
a systematic classification of a Killing vector field is required
for this problem.
The classification of the cohomogeneity-one strings
in $\min{3}$ has been performed by \cite{KKI10}.
In the flat spacetime, a Killing vector field can
always be written in inertial coordinate system
as $\xi_\nu=F_{\mu\nu}x^\mu+f_\nu$,
in terms of a constant alternative matrix $F_{\mu\nu}$
and a constant vector $f_\nu$.
Any pair $(F_{\mu\nu},f_\nu)$ corresponds to a Killing vector field
in a certain inertial coordinate system, and it undergoes
the action of Poincar\'e group under the change of the
inertial coordinate system.
The action of Poincar\'e group defines the equivalence relation
in the space of pairs $(F_{\mu\nu},f_\nu)$.
We first
enumerate the
equivalence classes of pairs $(F_{\mu\nu},f_\nu)$,
which give the canonical classification of Killing vector fields
in $\min{n}$.

It has also been shown in \cite{KKI10} that the equation of motion
for cohomogeneity-one strings in $\min{3}$ are all completely integrable,
so that every solution to it is given by quadrature.
We will extend this result to the case of $\min{n}$.

In the course of the analysis of the Killing vector fields in $\min{n}$,
we also obtained the canonical classification
of the 2-dimensional non-abelian Lie algebra of Killing vector fields,
which will also be reported here.

The organization of this paper is as follows:
In Sec. 2, the general properties of the
Killing vector field in $\min{n}$ is described.
In Sec. 3, the
constant  2-form $F_{\mu\nu}$, which characterizes the
Killing vector field,
in $\min{n}$ under the action of the Lorentz group are investigated.
In Sec. 4, the canonical classification of the Killing vector field
in $\min{n}$ is given.
In Sec. 5, the complete integrability of the cohomogeneith-one Nambu-Goto
strings is shown.
In Sec. 6, the canonical form of the representation of the
2-dimensional noncommutative Lie algebra of Killing vector fields
in $\min{n}$ is obtained.
In Sec. 7, we summerize our work.

\section{Killing vector fields in Minkowski spacetime}
We first review elementary properties of Killing vectors in
the Minkowski spacetime.

We denote the $(n+1)$-dimensional Minkowski spacetime by $(\min{n},\eta)$.
Let $(x^\mu)_{\mu=0,1,\dots,n}$ be the inertial coordinate system for $\min{n}$,
where the metric takes the form
\begin{align*}
  \eta=-(dx^0)^2+\sum_{i=1}^n(dx^i)^2.
\end{align*}
A Killing vector field on $\min{n}$, which is subject to the partial
differential equation
\begin{align*}
  \partial_\mu \xi_\nu+\partial_\nu\xi_\mu=0,
\end{align*}
is generally written as
\begin{align*}
  \xi=F_{\mu\nu}x^\mu dx^\nu+f_\nu dx^\nu,
\end{align*}
in terms of the constant alternating matrix $F_{\mu\nu}$
and the constant vector $f_\nu$.

Our first task is to classify all the canonical form of $\xi$,
under the action of the Poincar\'e group (or the inhomogeneous Lorentz group)
$\poin{n}$.

The Poincar\'e  transformation is given by the coordinate transformation
\begin{align*}
  x'^\mu=\varLambda^\mu{}_\nu x^\nu+c^\mu
\end{align*}  
between inertial coordinate systems.
Accordintly, the Killing vectof field transforms as
\begin{align*}
  \xi&=F'_{\mu\nu}x'^\mu dx'^\nu+f'_\nu dx'^\nu,\\
  F'_{\mu\nu}&=\varLambda_\mu{}^\lambda\varLambda_\nu{}^\rho F_{\lambda\rho},\\
  f'_\nu&=\varLambda_\nu{}^\lambda f_\lambda -F'_{\mu\nu}c^\mu.
\end{align*}
This can be interpreted as that
the element $g_{(\varLambda,c)}\in \poin{n}$ of the Poincar\'e group
acts on $F_{\mu\nu}$ and $f_\nu$ as
\begin{align*}
  g_{(\varLambda,c)}:(F_{\mu\nu},f_\nu)
  \longmapsto \left(
  \varLambda_\mu{}^\lambda\varLambda_\nu{}^\rho F_{\lambda\rho},
  \varLambda_\nu{}^\lambda
  f_\lambda-\varLambda_\mu{}^\lambda\varLambda_\nu{}^\rho F_{\lambda\rho}c^\mu
  \right),
\end{align*}
which gives a representation of $\poin{n}$.
Hence, in particular,  $F_{\mu\nu}$ transforms as a Lorentz tensor.

\section{Classification of constant 2-forms}
Each Killing vector is characterized by a constant alternative matrix $F_{\mu\nu}$
and a constant vector $f_\nu$, and these undergo the Poincar\'e transformation.
We want to enumerate the Killing vectors in the Minkowski spacetime
$\min{n}$
up to the Poincar\'e transformation.
In particular, the constant alternative matrix $F_{\mu\nu}$ transforms as
a 2-form. In the following, we give a canonical form of $F_{\mu\nu}$ under the action
of the homogeneous Lorentz transformation.

We first show the following useful fact.

\begin{prop}
  Every constant 2-form $F$ in $\min{n}$ can be made into the
form
\begin{align*}
  F=\sum_{k=0}^{n-1}u_k dx^kdx^{k+1},
\end{align*}
under the action of the special orthogonal group $\so{n}$,
which preserves $\sum_{i=1}^n (x^i)^2$.
\end{prop}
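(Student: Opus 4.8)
The plan is to split $F$ into its time–space and space–space parts and then choose an adapted orthonormal spatial frame by a Lanczos-type recursion. Writing $E_i:=F_{0i}$ ($1\le i\le n$) for the ``electric'' vector and $B_{ij}:=F_{ij}$ ($1\le i,j\le n$) for the ``magnetic'' skew-symmetric matrix, an element of $\so{n}$ acts on the spatial indices only, transforming $E$ as a vector and $B$ by congruence (i.e. as a skew-symmetric tensor), while fixing the $x^0$ axis. Thus it suffices to produce an orthonormal basis $(v_1,\dots,v_n)$ of $\mathbb{R}^n$, to be used as the new spatial axes, such that $E$ is proportional to $v_1$ and the matrix $(v_i^\top B v_j)$ is tridiagonal with vanishing diagonal; in such coordinates the electric part contributes only $u_0\,dx^0 dx^1$ and the magnetic part contributes $\sum_{k=1}^{n-1}u_k\,dx^k dx^{k+1}$, which is exactly the asserted form.

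First I would set $v_1=E/|E|$ when $E\neq0$ (and take $v_1$ arbitrary otherwise, which merely forces $u_0=0$), and then define $v_2,v_3,\dots$ recursively by orthonormalizing $B v_k$ against the vectors already constructed. The decisive point is that skew-symmetry collapses this Gram--Schmidt procedure into a three-term recurrence $B v_k=-\beta_{k-1}v_{k-1}+\beta_k v_{k+1}$: one has $\langle B v_k,v_k\rangle=0$ because $B$ is skew, and for $j\le k-2$ one has $\langle B v_k,v_j\rangle=-\langle v_k,B v_j\rangle=0$ since, by the inductive hypothesis, $B v_j\in\mathrm{span}(v_{j-1},v_j,v_{j+1})\subseteq\mathrm{span}(v_1,\dots,v_{k-1})$, which is orthogonal to $v_k$. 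Consequently $v_i^\top B v_j=0$ whenever $|i-j|\ge2$, giving tridiagonality, with $u_0=|E|$ and $u_k=\pm\beta_k$.

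I would then dispose of the degenerate cases. A breakdown $\beta_k=0$ means that $\mathrm{span}(v_1,\dots,v_k)$ is $B$-invariant; since $B$ is skew its orthogonal complement is $B$-invariant as well, so I would simply restart the recurrence there from any unit vector. This only sets the coefficient $u_k=0$, which is still permitted by the stated form. Finally I would verify that the frame $(v_1,\dots,v_n)$ is realized by an element of $\so{n}$ rather than merely $O_n$: every normalization step amounts to aligning a nonzero vector to a coordinate axis within a subspace of dimension at least two, which is always achievable by a rotation, so the composite change of frame has determinant $+1$.

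The only genuine obstacle is ensuring that the tridiagonal structure holds globally and not just between consecutive basis vectors; this is precisely where the skew-symmetry of $B$ (equivalently, that the spatial block of $F$ is a genuine $2$-form) is indispensable, through the identity $\langle B x,y\rangle=-\langle x,B y\rangle$. The breakdown analysis above confirms that the argument survives intact in the non-generic situations where the electric vector vanishes or a Krylov subspace closes up prematurely.
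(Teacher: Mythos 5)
Your proof is correct and is essentially the paper's own argument recast in matrix language: the paper writes $F=dx^0\wedge\omega_1+F_1$, rotates so that $\omega_1=u_0\,dx^1$, and recurses on $F_1$ with rotations fixing the earlier axes, which is precisely your Lanczos recursion $v_1\propto E$, $v_{k+1}\propto$ projection of $Bv_k$, the tridiagonality being encoded in the coordinate bookkeeping rather than proved via the skew-symmetric three-term recurrence. Your explicit handling of breakdowns ($\beta_k=0$) and of realizing the frame in $\so{n}$ (using the sign freedom of the last basis vector, harmless since the $u_k$ carry no sign constraint) supplies details that the paper passes over silently.
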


\begin{proof}
  A constant 2-form $F$ can be written as
  \begin{align*}
    F&=dx^0\wedge \omega_1+F_1,\\
\omega_1&=\sum_{i=1}^n (\omega_1)_i dx^i,\\
F_1&=\sum_{1\le i<j\le n}(F_1)_{ij}dx^idx^j.
  \end{align*}
There is an $\so{n}$ action preserving $\sum_{i=1}^n(x^i)^2$,
such that $\omega_1$ takes the form
\begin{align*}
  \omega_1=u_0 dx^1.
\end{align*}
Under such a transformation, $F$ becomes
\begin{align*}
  F=u_0dx^0dx^1+F_1,
\end{align*}
where $F_1$ is generated only by $dx^1,\dots,dx^n$.

In a similar manner,  $F_1$ can be written,
under the appropriate $\so{n-1}$ transformation
preserving $\sum_{k=2}^n(x^k)^2$,
in the form
\begin{align*}
  F_1=u_1dx^1dx^2+F_2,
\end{align*}
where $F_2$ is generated by $dx^2,\dots,dx^n$.

We can continue  recursively
until we finally get
\begin{align*}
  F=u_0dx^0dx^1+u_1dx^1dx^2+\dots+u_{n-1}dx^{n-1}dx^n.
\end{align*}
\end{proof}

Hence all the information of a constant 2-form is encoded 
in the array $(u_0,\dots,u_{n-1})$, however, 
it is still not canonical in the sense
that diffrent arrays may give equivalent 2-forms.
In the following, we remove such redundancy along similar lines 
as in the anti-de Sitter case \cite{BHTZ,HP97,MR04}.

We denote by
$\F$ the linear operator on 
the complexified vector space $\minc{n}=\mathbb{C}\otimes_{\mathbb{R}}\min{n}$, 
or explicitly $(\F v)^\mu =F^\mu{}_\nu v^\nu$ for $v\in \minc{n}$.

The following  Lemmas are useful for the classification of the
constant 2-form in $\min{n}$.
The argument here is mostly taken from Ruse \cite{Rus36}.
\begin{lem}
  If $\lambda$ is an eigenvalue of the linear operator $\F$,
which corresponds to the constant 2-form $F$, then
$-\lambda$, $\overline{\lambda}$, and $-\overline{\lambda}$
are  eigenvalues of $\F$.
\end{lem}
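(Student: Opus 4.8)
The plan is to exploit two independent structural properties of the operator $\F$: that, after raising one index with $\eta$, it is represented by a \emph{real} matrix, and that it is \emph{skew-adjoint} with respect to the Minkowski metric $\eta$. The first property will yield the conjugate $\overline{\lambda}$, the second will yield $-\lambda$, and composing the two will produce $-\overline{\lambda}$.

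First I would fix notation by writing $M$ for the matrix of $\F$, namely $M^\mu{}_\nu=\eta^{\mu\lambda}F_{\lambda\nu}$, so that $M=\eta^{-1}F$ where $F=(F_{\mu\nu})$ is the given real antisymmetric matrix and $\eta=\mathrm{diag}(-1,1,\dots,1)$; note that $\eta^{-1}=\eta$. Since $F$ and $\eta$ are both real, $M$ is a real matrix, hence its characteristic polynomial $\det(M-\lambda I)$ has real coefficients. Consequently its complex roots occur in conjugate pairs, which immediately gives that $\overline{\lambda}$ is an eigenvalue of $\F$ whenever $\lambda$ is.

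The core step is the $\lambda\mapsto-\lambda$ symmetry. I would first record that $\F$ is $\eta$-skew-adjoint, i.e. $\eta(\F u,v)=-\eta(u,\F v)$ for all $u,v\in\minc{n}$; this is a direct consequence of the antisymmetry $F_{\mu\nu}=-F_{\nu\mu}$, since both sides reduce to the pairing $F_{\mu\nu}u^\mu v^\nu$ up to the sign that antisymmetry supplies. In matrix terms this skew-adjointness reads $M^{T}=-\eta M\eta^{-1}$, so that $M^{T}$ is similar to $-M$. Because a matrix and its transpose always share the same characteristic polynomial, $M$ and $-M$ have identical spectra; hence $-\lambda$ is an eigenvalue of $\F$ whenever $\lambda$ is.

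Finally I would combine the two symmetries: applying conjugation and negation in turn sends $\lambda$ to $\overline{\lambda}$, to $-\lambda$, and to $-\overline{\lambda}$, which establishes the claim. I do not expect a genuine obstacle here, as the statement is essentially a linear-algebra fact about skew-adjoint operators on a real space carrying a nondegenerate bilinear form; the only points demanding care are the correct index bookkeeping in deriving the skew-adjointness identity and keeping track of the sign together with the relation $\eta^{-1}=\eta$ when passing from $\eta(\F u,v)=-\eta(u,\F v)$ to the similarity $M^{T}=-\eta M\eta^{-1}$.
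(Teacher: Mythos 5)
Your proof is correct and follows essentially the same route as the paper: the paper condenses the argument into a single chain of determinant identities, $\det(\lambda\eta-F)=\det(\lambda\eta+F)=\overline{\det(\overline{\lambda}\eta-F)}$, but the two mechanisms it invokes are exactly yours — the transpose/antisymmetry of $F$ (your $\eta$-skew-adjointness, giving $-\lambda$) and the reality of $F$ and $\eta$ (your real characteristic polynomial, giving $\overline{\lambda}$). Your version merely unpacks the same facts into similarity language, so there is no substantive difference.
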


\begin{proof}
By regarding the 2-form $F$ as a real, alternating matrix,
the statement of the Lemma follows from
\begin{align*}
0=  \operatorname{det}(\lambda \eta-F)
&= \operatorname{det}(\lambda \eta+F)\\
&= \overline{\operatorname{det}(\overline{\lambda} \eta-F)}\\
&= \overline{\operatorname{det}(\overline{\lambda} \eta+F)}.
\end{align*}
\end{proof}

\begin{lem}\label{lem:ortho}
Let $\lambda$ and $\mu$ be eigenvalues of $\F$, such that $\lambda+\mu\ne 0$,
and let   $v$, $w$ be eigenvector corresponding to $\lambda$, $\mu$, respectively. Then, $v$ and $w$ are mutually orthogonal with respect to 
the bilinear form $\eta$. 
\end{lem}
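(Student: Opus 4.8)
The plan is to show that the operator $\F$ is skew-adjoint with respect to the (bilinearly extended) form $\eta$ on $\minc{n}$, and then to read off the orthogonality directly from the eigenvalue equations. The only structural input is the alternating property $F_{\mu\nu}=-F_{\nu\mu}$; the spectral hypothesis enters only at the very last step.

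First I would establish the identity
\begin{align*}
  \eta(\F v,w)=-\eta(v,\F w)
\end{align*}
for all $v,w\in\minc{n}$. Writing both sides in components and using $F^\mu{}_\nu=\eta^{\mu\lambda}F_{\lambda\nu}$, each contraction with $\eta$ absorbs one metric factor, leaving
\begin{align*}
  \eta(\F v,w)=F_{\beta\gamma}v^\gamma w^\beta,\qquad
  \eta(v,\F w)=F_{\alpha\gamma}v^\alpha w^\gamma.
\end{align*}
Relabeling the dummy indices $\beta\leftrightarrow\gamma$ in the first expression and then invoking $F_{\gamma\beta}=-F_{\beta\gamma}$ turns it into the negative of the second, which is exactly the asserted skew-adjointness.

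With this in hand I would substitute the eigenvalue relations $\F v=\lambda v$ and $\F w=\mu w$. The left-hand side becomes $\lambda\,\eta(v,w)$ and the right-hand side becomes $-\mu\,\eta(v,w)$, so that
\begin{align*}
  (\lambda+\mu)\,\eta(v,w)=0.
\end{align*}
Since $\lambda+\mu\ne 0$ by hypothesis, we conclude $\eta(v,w)=0$, as required.

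I do not expect a genuine obstacle: this is the standard fact that eigenvectors of a skew-adjoint operator belonging to eigenvalues with nonzero sum are orthogonal. The only points demanding care are the bookkeeping in raising the index of $F$ to form $\F$, and the observation that $\eta$ is extended \emph{bilinearly} (not Hermitianly) to the complexification $\minc{n}$, so that no complex conjugation intervenes and the scalars $\lambda,\mu$ factor cleanly out of each slot.
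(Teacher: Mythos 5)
Your proposal is correct and follows essentially the same route as the paper: the paper's proof is precisely the skew-adjointness identity $\eta((\F)^Tv,w)=-\eta(\F v,w)$ applied to the two eigenvalue relations, yielding $(\lambda+\mu)\eta(v,w)=0$. You merely spell out in components the skew-adjointness step that the paper takes for granted, and you correctly note that $\eta$ is extended bilinearly (no conjugation), which the paper leaves implicit.
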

\begin{proof}
  The statement of the Lemma follows from
  \begin{align*}
    \mu \eta(v,w)=\eta(v,\F w)=\eta((\F)^Tv,w)=-\eta(\F v,w)
=-\lambda \eta(v,w)\\
\therefore (\lambda+\mu)\eta(v,w)=0.
  \end{align*}
\end{proof}

\begin{cor}\label{cor:null}
  An eigenvector corresponding to a nonzero eigenvalue of $\F$
is a null vector.
\end{cor}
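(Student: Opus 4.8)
The plan is to obtain this statement as an immediate specialization of Lemma~\ref{lem:ortho}. Recall that a vector $v\in\minc{n}$ is null precisely when $\eta(v,v)=0$, where $\eta$ has been extended complex-bilinearly (not Hermitianly) to the complexification. So the task reduces to showing that $\eta(v,v)=0$ for any eigenvector $v$ associated with a nonzero eigenvalue $\lambda$.

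First I would apply Lemma~\ref{lem:ortho} in the degenerate situation where the two eigenvectors are taken to be one and the same: set $w=v$ and $\mu=\lambda$. The hypothesis of that lemma is that the two eigenvalues sum to a nonzero value, and here $\lambda+\mu=2\lambda$, which is nonzero exactly because $\lambda\neq 0$. The lemma then yields $\eta(v,v)=0$, i.e.\ $v$ is null, as claimed.

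The only point needing a moment's care is to confirm that Lemma~\ref{lem:ortho} remains valid when its two eigenvectors coincide. Inspecting its proof, the computation $\mu\,\eta(v,w)=\eta(v,\F w)=-\eta(\F v,w)=-\lambda\,\eta(v,w)$ uses nothing about $v$ and $w$ beyond their being eigenvectors and the $\eta$-antisymmetry of $\F$; no distinctness of the two vectors is assumed. Equivalently, one may simply rerun that one-line computation directly with $w=v$, obtaining $2\lambda\,\eta(v,v)=0$ and hence $\eta(v,v)=0$. I therefore do not anticipate any genuine obstacle in this argument; it is a direct corollary.
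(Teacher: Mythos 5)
Your proof is correct and is exactly the argument the paper intends: the corollary follows from Lemma~\ref{lem:ortho} by taking $w=v$ and $\mu=\lambda$, so that $\lambda+\mu=2\lambda\neq 0$ gives $\eta(v,v)=0$. Your check that the lemma's one-line computation never uses distinctness of the two eigenvectors is the right (and only) point of care, and it holds.
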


Note that the null eigenvector here may be a complex null vector
such as $\partial_1+i\partial_2$.

\begin{lem}\label{lem:eigenvalue}
Each eigenvalue of $\F$ is real or pure imaginary.
\end{lem}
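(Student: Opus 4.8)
The plan is to argue by contradiction: assuming an eigenvalue $\lambda$ is neither real nor pure imaginary, I will extract from it a pair of real, mutually orthogonal null vectors, and then use the fact that in Lorentzian signature such vectors must be parallel to reach a contradiction.

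First I would suppose $\lambda=p+iq$ is an eigenvalue of $\F$ with $p\ne 0$ and $q\ne 0$ (the case $\lambda=0$ being trivially real), and let $v$ be a corresponding eigenvector. Since $F$ is a real matrix, complex conjugation of $\F v=\lambda v$ gives $\F\overline{v}=\overline{\lambda}\,\overline{v}$, so $\overline{v}$ is an eigenvector for $\overline{\lambda}$. Because $\lambda+\overline{\lambda}=2p\ne 0$, Lemma \ref{lem:ortho} yields $\eta(v,\overline{v})=0$, and because $\lambda\ne 0$, Corollary \ref{cor:null} gives $\eta(v,v)=0$.

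Next I would split $v=a+ib$ into real and imaginary parts. Expanding $\eta(v,v)=0$ and using the symmetry of $\eta$ gives $\eta(a,a)=\eta(b,b)$ together with $\eta(a,b)=0$, while $\eta(v,\overline{v})=0$ gives $\eta(a,a)+\eta(b,b)=0$. Combining these forces $\eta(a,a)=\eta(b,b)=0$ and $\eta(a,b)=0$; that is, $a$ and $b$ are real, mutually orthogonal null vectors.

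The crux, and the one genuinely signature-dependent step, is then to invoke the Lorentzian fact that in $\min{n}$ any two orthogonal null vectors are parallel: the orthogonal complement of a null vector $a$ is a degenerate hyperplane on which $\eta$ is positive semidefinite with kernel $\langle a\rangle$, so a null vector $b$ orthogonal to $a$ must already lie in $\langle a\rangle$. Hence $a$ and $b$ are proportional, so $v$ lies in the complex line spanned by a single real vector $u$; then $\F u=\lambda u$ with both $u$ and $\F u$ real forces $\lambda\in\mathbb{R}$, contradicting $q\ne 0$. I expect this last step to be the main obstacle, and it is precisely where the signature $(n,1)$ is essential: in an indefinite signature with more than one timelike direction, orthogonal null vectors need not be parallel, and the conclusion would fail.
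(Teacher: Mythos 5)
Your proposal is correct and follows essentially the same route as the paper's proof: invoke Corollary \ref{cor:null} and Lemma \ref{lem:ortho} (via reality of $\F$, so $\overline{v}$ is an eigenvector for $\overline{\lambda}$) to get $\eta(v,v)=0$ and $\eta(v,\overline{v})=0$, conclude that $v$ is essentially a real null vector, and derive a contradiction with $\lambda\notin\mathbb{R}$. The only difference is that you carefully justify the step the paper states tersely (``this is possible only when $v$ is a real null vector'') by splitting $v=a+ib$ and using the Lorentzian fact that orthogonal null vectors are parallel, which is a welcome filling-in of detail rather than a different argument.
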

\begin{proof}
  Suppose that $\F$ has an eigenvalue $\lambda=a+ib$ $(a,b\ne 0)$.
Let $v$ be an eigenvector corresponding to $\lambda$.
By Corollary. \ref{cor:null}, $v$ is a null vector.
By reality of $\F$, $\overline v$ is an eigenvector corresponding to
the eigenvalue $\overline\lambda$.
Since $\lambda+\overline{\lambda}\ne 0$, 
$v$ and $\overline v$ are mutually orthogonal by
Lemma \ref{lem:ortho}.
This is possible only when $v$ is a real null vector.

However, it is impossible since, for the real eigenvector $v$,  
$v$ should simaltaneously be
an eigenvector corresponding to $\lambda$ and $\overline\lambda$,
which leads to a contradiction.
\end{proof}

Note that this Lemma also is specific to the Lorentzian signature case.
When the underlying space is $\mathbb{R}^{n,m}$ $(n,m\ge 2)$,
$\F$ may have  general complex eigenvalues (compare with the anti-de Sitter
case \cite{MR04}).

\begin{lem}\label{lem:realnull}
  Every eigenvector corresponding to a nonzero real eigenvalue of $\F$
is a real null vector.
\end{lem}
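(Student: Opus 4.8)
The plan is to exploit the reality of $\F$ together with the Lorentzian-signature fact that two orthogonal null vectors must be collinear. Let $\lambda$ be a nonzero real eigenvalue and $v$ a corresponding eigenvector, and write $v=a+ib$ with $a,b$ real vectors. Since $\F$ is a real operator and $\lambda$ is real, the eigenvalue equation $\F v=\lambda v$ separates into the two real equations $\F a=\lambda a$ and $\F b=\lambda b$. Thus each of $a$ and $b$, whenever it is nonzero, is itself a real eigenvector of $\F$ for the same nonzero eigenvalue $\lambda$.

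Next I would invoke Corollary \ref{cor:null} to conclude that $a$ and $b$ are null vectors, and Lemma \ref{lem:ortho} — applied to the pair $a,b$, both belonging to $\lambda$, for which $\lambda+\lambda=2\lambda\ne 0$ — to conclude that $\eta(a,b)=0$. At this stage $a$ and $b$ are two mutually orthogonal real null vectors. (Equivalently, one could work with $v$ and $\overline{v}$ directly: $\overline{v}$ is an eigenvector for $\overline{\lambda}=\lambda$, and $\eta(v,v)=\eta(v,\overline v)=0$ yields $\eta(a,a)=\eta(b,b)=\eta(a,b)=0$.)

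The key step, and the only one specific to the Lorentzian signature, is the elementary observation that in $\min{n}$ any two orthogonal null vectors are proportional: after an $\so{n}$ rotation one may take $a=(1,1,0,\dots,0)$, whereupon $\eta(a,b)=0$ forces $b^0=b^1$, and nullity of $b$ then forces the remaining spatial components to vanish, so that $b=b^0 a$. Hence $a$ and $b$ span the same real line, and $v=a+ib$ is a complex scalar multiple of a single real null vector; the degenerate cases $a=0$ or $b=0$ give the conclusion immediately. I do not anticipate a genuine obstacle here, since the whole argument is a short deduction from results already established, with the signature-specific collinearity of orthogonal null vectors carrying the essential content — consistent with the remark after Lemma \ref{lem:eigenvalue} that such statements fail in signature $(n,m)$ with $n,m\ge 2$.
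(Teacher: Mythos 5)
Your proof is correct and takes essentially the same route as the paper's: reality of $\F$ and of $\lambda$ makes $\overline v$ (equivalently, the real and imaginary parts of $v$) an eigenvector for the same nonzero eigenvalue, and Lemma~\ref{lem:ortho} together with Corollary~\ref{cor:null} then forces mutually orthogonal real null data, which in Lorentzian signature can only come from a single real null direction. The only difference is that you explicitly prove the signature-specific final step (two orthogonal real null vectors are collinear), which the paper leaves implicit in the phrase ``This occurs only when $v$ is a real null vector.''
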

\begin{proof}
  Let $\lambda$ be a nonzero real eigenvalue of $\F$, and let $v$ be
an eigenvector corresponding to $\lambda$.
By reality of $\F$, $\overline v$ also is an eigenvector corresponding to
$\lambda$.
By Lemma \ref{lem:ortho}, $v$ and $\overline v$ are
mutually orthogonal with respect to $\eta$.
This occurs only when $v$ is a real null vector.
\end{proof}

\begin{lem}\label{lem:realeigenvalue}
If $\F$ has a nonzero real eigenvalue $a$,
$\F$ does not 
have nonzero real eigenvalues other than $a$ or $-a$.
\end{lem}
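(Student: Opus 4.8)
The plan is to argue by contradiction. Suppose that, in addition to the nonzero real eigenvalue $a$, the operator $\F$ possesses another nonzero real eigenvalue $b$ with $b\ne a$ and $b\ne -a$. First I would invoke Lemma \ref{lem:realnull} to obtain a real null eigenvector $v$ of $\F$ corresponding to $a$ and a real null eigenvector $w$ corresponding to $b$. Since $b\ne -a$, we have $a+b\ne 0$, so Lemma \ref{lem:ortho} applies and yields $\eta(v,w)=0$.

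The decisive step is a purely metric fact special to the Lorentzian signature: in $\min{n}$ two real null vectors that are $\eta$-orthogonal are necessarily proportional. To see this, write $v=(v^0,\vec v)$ and $w=(w^0,\vec w)$ in the inertial splitting; nullity gives $|\vec v|=|v^0|$ and $|\vec w|=|w^0|$, while orthogonality gives $\vec v\cdot\vec w=v^0w^0$. Together with the Cauchy--Schwarz inequality $|\vec v\cdot\vec w|\le|\vec v|\,|\vec w|=|v^0|\,|w^0|$, this forces the equality case of Cauchy--Schwarz, so the spatial parts are parallel and hence $v$ and $w$ are proportional.

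Applying this to our $v$ and $w$, I would conclude $w=c\,v$ for some nonzero scalar $c$. But then $w$ is an eigenvector of $\F$ for the eigenvalue $a$ as well as for $b$, which forces $a=b$, contradicting $b\ne a$. This contradiction establishes the Lemma.

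I expect the main obstacle to be the Lorentzian orthogonality fact above; everything else is an immediate consequence of the earlier lemmas. It is worth emphasizing that this is precisely the step where the presence of a single timelike direction enters, consistent with the remark following Lemma \ref{lem:eigenvalue} that the phenomenon is specific to the Lorentzian signature.
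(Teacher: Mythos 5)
Your proposal is correct and follows essentially the same route as the paper: invoke Lemma \ref{lem:realnull} to get real null eigenvectors, apply Lemma \ref{lem:ortho} to force orthogonality, and conclude that this is impossible for linearly independent real null vectors. The only difference is that the paper simply asserts this last Lorentzian fact, whereas you supply the (correct) Cauchy--Schwarz argument showing that $\eta$-orthogonal real null vectors must be proportional.
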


\begin{proof}
  Let $a$ and $a'$ ($a\ne a'$) be nonzero real eigenvalues
of $\F$, and let $v$ and $w$ be eigenvectors corresponding to
$a$ and $a'$, respectively. 
By Lemma \ref{lem:realnull}, $v$ and $w$ are real null vectors.
This implies that $a'=-a$, since otherwise, by Lemma \ref{lem:ortho},
 $v$ and $w$ are
mutually orthogonal with respect to $\eta$, which is impossible
for linearly independent real null vectors $v$ and $w$.
\end{proof}

\begin{lem}
  An eigenvector corresponding to a nonzero pure imaginary eigenvalue
of $\F$ is a complex null vector.
\end{lem}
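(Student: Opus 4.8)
The plan is to split the assertion into its two components: that the eigenvector is null, and that it is genuinely complex. The first component is immediate, since a nonzero pure imaginary eigenvalue is in particular a nonzero eigenvalue, so Corollary \ref{cor:null} already guarantees that any corresponding eigenvector $v$ satisfies $\eta(v,v)=0$. Thus the only real content, and the feature that distinguishes this Lemma from Lemma \ref{lem:realnull}, is to show that $v$ cannot be chosen real, nor even a complex scalar multiple of a real vector.

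To establish this non-reality I would argue by contradiction. Suppose $v=\alpha u$ for some nonzero real vector $u$ and some $\alpha\in\mathbb{C}\setminus\{0\}$. Writing the eigenvalue as $\lambda=ib$ with $b\ne 0$, the eigenvalue equation then reads $\F u=ib\,u$. But $\F$ is a real matrix and $u$ is a real vector, so the left-hand side is real, whereas the right-hand side is a nonzero pure imaginary vector; this is impossible. Hence $v$ is linearly independent over $\mathbb{R}$ from every real vector, i.e. it is a genuine complex null vector, in the sense of the example $\partial_1+i\partial_2$ noted after Corollary \ref{cor:null}.

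To make the geometric picture explicit, which will presumably be wanted in the subsequent classification, I would decompose $v=p+iq$ into its real and imaginary parts. Reality of $\F$ splits the relation $\F v=ib\,v$ into $\F p=-b\,q$ and $\F q=b\,p$, so $\F$ acts as an infinitesimal rotation of the real $2$-plane $\mathrm{span}_{\mathbb{R}}\{p,q\}$, and the non-reality just proved forces $q\ne 0$ with $p,q$ linearly independent. Expanding $\eta(v,v)=0$ into real and imaginary parts gives $\eta(p,p)=\eta(q,q)$ and $\eta(p,q)=0$; since two orthogonal real null vectors would have to be parallel, the common value is strictly positive, so this invariant plane is spacelike, consistent with a rotation-type generator.

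I do not anticipate any genuine obstacle: the proof is a short combination of Corollary \ref{cor:null} with the reality of $\F$. The only point requiring care is the precise meaning of ``complex null vector'': one must rule out not merely a real eigenvector but any complex rescaling of one, which is exactly what the hypothesis $b\ne 0$ together with the reality of $\F$ delivers.
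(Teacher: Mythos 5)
Your proof is correct, and its skeleton matches the paper's: nullity comes straight from Corollary \ref{cor:null}, and non-reality from the reality of $\F$ together with $b\ne 0$. The mechanism for the non-reality step differs slightly, though. The paper observes that $\overline{u}$ is an eigenvector for the \emph{different} eigenvalue $-ib$, hence $u$ and $\overline{u}$ are linearly independent, which rules out $u$ being a complex rescaling of a real vector; you instead derive a direct parity contradiction (a real matrix applied to a nonzero real vector cannot equal a nonzero pure-imaginary multiple of it). Your version is marginally more elementary, since it avoids invoking the independence of eigenvectors belonging to distinct eigenvalues; the paper's version has the advantage of reusing the conjugation device already at work in Lemmas \ref{lem:eigenvalue} and \ref{lem:realnull}. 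Your closing paragraph, decomposing $v=p+iq$ and showing $\F p=-bq$, $\F q=bp$ with $\eta(p,q)=0$ and $\eta(p,p)=\eta(q,q)>0$, so that the invariant real $2$-plane is spacelike, is correct but exceeds what the Lemma asserts; it is essentially the geometric content the paper exploits later when assembling the rotation ($S$-)blocks, so it is a sensible addition even if not required here.
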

\begin{proof}
  Let $ib\ne 0$  be a pure imaginary eigenvalue of $\F$, 
and let $u$ be an corresponding eigenvector.
By Corollary \ref{cor:null}, $u$ is a null vector.
Since $\overline u$ is an eigenvector corresponding to
the different eigenvalue $-ib$ of $\F$, $u$ and $\overline u$ should be 
linearly independent.
Hence $u$ must be a complex null vector.
\end{proof}

The canonical forms of the constant 2-form $F$ can be obtained
by choosing coordinate basis vectors respecting eigenvectors
of $\F$.
When $\F$ has a nonzero real eigenvalue, this procedure is particularly simple.

\begin{lem}\label{lem:nonzeroreal}
  If $\F$ has a nonzero real eigenvalue $a$,
$F$ can be brought into the form
\begin{align*}
F&=adx^0dx^1,
\end{align*}
or
\begin{align*}
  F&=adx^0dx^1+\sum_{k=1}^r b_k dx^{2k}dx^{2k+1},~~~
(b_1\ge b_2\ge\dots\ge b_r>0)
\end{align*}
by  an action of the Lorentz group $O_{n,1}$.
\end{lem}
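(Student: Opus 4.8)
The plan is to use the sign-pairing of the spectrum to extract a single Lorentzian $2$-plane carrying the eigenvalue $a$, split it off orthogonally, and then recognize the residual operator as a skew-symmetric map on a Euclidean space, where the classical real normal form finishes the job.

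First I would produce the two real null eigenvectors. Since the spectrum of $\F$ is invariant under $\lambda\mapsto-\lambda$ (shown above) and $a\ne 0$, the value $-a$ is also an eigenvalue, and by Lemma~\ref{lem:realnull} an eigenvector $v$ for $a$ and an eigenvector $w$ for $-a$ may both be taken to be real null vectors; they are linearly independent because $a\ne-a$. The delicate point is that Lemma~\ref{lem:ortho} does \emph{not} apply to the pair $(a,-a)$, since $a+(-a)=0$, so orthogonality of $v,w$ cannot be invoked. Instead I would argue directly that $\eta(v,w)\ne0$: otherwise $\mathrm{span}(v,w)$ would be a $2$-dimensional totally isotropic subspace, impossible in signature $(n,1)$, where isotropic subspaces have dimension at most $1$. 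Rescaling $v$, I normalize $\eta(v,w)=-1$ and set
\[
  e_0=\tfrac{1}{\sqrt2}(v+w),\qquad e_1=\tfrac{1}{\sqrt2}(v-w).
\]
A direct computation then gives $\eta(e_0,e_0)=-1$, $\eta(e_1,e_1)=1$, $\eta(e_0,e_1)=0$, and, from $\F v=av$ and $\F w=-aw$,
\[
  \F e_0=a\,e_1,\qquad \F e_1=a\,e_0.
\]
Identifying $e_0,e_1$ with the coordinate directions $\partial_0,\partial_1$ and reading off $F(\,\cdot\,,\,\cdot\,)=\eta(\,\cdot\,,\F\,\cdot\,)$ on this plane shows that $F$ restricts to $a\,dx^0dx^1$ there (the overall sign being adjustable by a reflection in $O_{n,1}$).

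Next I would split this plane off. Put $P=\mathrm{span}(e_0,e_1)$; by the above $\eta|_P$ has signature $(1,1)$, so $P$ is nondegenerate and $\min{n}=P\oplus P^\perp$ orthogonally, with $P^\perp$ of signature $(n-1,0)$, i.e. positive definite of dimension $n-1$. Because $F$ is a $2$-form, $\F$ is $\eta$-antisymmetric, and $P$ is $\F$-invariant; hence for $u\in P^\perp$ and $p\in P$ one has $\eta(\F u,p)=-\eta(u,\F p)=0$, so $P^\perp$ is $\F$-invariant as well. Thus $\F$ restricts to a skew-symmetric operator on the Euclidean space $P^\perp$, and the classical real normal form supplies an orthonormal basis $e_2,\dots,e_n$ of $P^\perp$ in which $\F|_{P^\perp}$ is block-diagonal, with $2\times2$ blocks of pure imaginary eigenvalues $\pm ib_k$ and possibly a zero block; equivalently $F|_{P^\perp}=\sum_k b_k\,dx^{2k}dx^{2k+1}$. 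This is consistent with Lemma~\ref{lem:realeigenvalue}, since a skew operator on a Euclidean space has purely imaginary spectrum and so produces no further real eigenvalue on $P^\perp$.

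Finally I would assemble the frame: the transition from the inertial frame to the orthonormal frame $(e_0,e_1,\dots,e_n)$ is an element of $O_{n,1}$, and in it
\[
  F=a\,dx^0dx^1+\sum_{k=1}^r b_k\,dx^{2k}dx^{2k+1};
\]
choosing the orientation within each block makes $b_k>0$, relabelling the blocks arranges $b_1\ge\cdots\ge b_r>0$, and the case $\F|_{P^\perp}=0$ gives the first alternative $F=a\,dx^0dx^1$. I expect the main obstacle to be exactly the $a+(-a)=0$ step: one must bypass the orthogonality lemma and build the timelike/spacelike pair $e_0,e_1$ by hand, verifying that the unique timelike direction of $\min{n}$ is entirely absorbed into $P$ so that $P^\perp$ is genuinely Euclidean and the standard skew normal form becomes applicable.
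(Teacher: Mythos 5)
Your proof is correct, and it takes a genuinely different route from the paper's. The paper works with only the single eigenvector $v$ for $a$: it fixes an inertial frame in which $v=\tfrac{1}{\sqrt2}(\partial_0-\partial_1)$, writes the most general $2$-form admitting $v$ as an eigenvector (which forces $f=a$ and kills the $(dx^0-dx^1)dx^i$ terms), then removes the remaining cross terms $g_i(dx^0+dx^1)dx^i$ by an explicit null rotation, the key algebraic point being the invertibility of the matrix $a\delta_{ij}+H_{ij}$ (its eigenvalues are $a$ plus a purely imaginary number, hence nonzero); a final $O_{n-1}$ rotation block-diagonalizes $H$. You instead invoke the spectral symmetry $\lambda\mapsto-\lambda$ to obtain a second real null eigenvector $w$ for $-a$, prove $\eta(v,w)\ne0$ from the fact that totally isotropic subspaces in signature $(n,1)$ are at most one-dimensional (correctly flagging that Lemma \ref{lem:ortho} says nothing about the pair $(a,-a)$), and split $\min{n}$ into the $\F$-invariant Lorentzian plane $P=\operatorname{span}(v,w)$ and its $\F$-invariant Euclidean orthocomplement, where the classical normal form for skew operators on a positive-definite space finishes the argument. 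Your version buys a coordinate-free, structural proof: no explicit null-rotation computation, no invertibility lemma, and it makes manifest why the residual spectrum on $P^\perp$ is purely imaginary. The paper's version buys economy of hypotheses (only one eigenvector is ever needed) and an explicit transformation; moreover its null-rotation technique is the one reused later, e.g.\ in Proposition \ref{prop:null} and Lemma \ref{lem:FGF}, where no second null eigendirection is available. Both arguments are complete and yield the same canonical form.
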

\begin{proof}
  Let $v$ be a eigenvector corresponding to the nonzero eigenvalue $a$.
By Corollary \ref{cor:null}, $v$ is a real null vector.
We can choose an inertial coordinate system, under which $v$ takes the form
\begin{align*}
  v=\dfrac{1}{\sqrt{2}}(\partial_0-\partial_1).
\end{align*}
The 2-form $F$ generally takes the form
\begin{align*}
  F=fdx^0dx^1+\sum_{i=2}^n g_i(dx^0+dx^1)dx^i+\sum_{i=2}^n h_i(dx^0-dx^1)dx^i
+\sum_{2\le i<j\le n}H_{ij}dx^idx^j.
\end{align*}
However, since $v$ is a eigenvector corresponding to $a$,
$f=a$ and $h_i=0$ $(i=2,3,\dots, n)$ holds, so that
we have
\begin{align*}
  F=adx^0dx^1+\sum_{i=2}^n g_i(dx^0+dx^1)dx^i
+\sum_{2\le i<j\le n}H_{ij}dx^idx^j.
\end{align*}

Under the Lorentz transformation
\begin{align*}
x'^0&= \left(1+\dfrac{\beta^2}{2}\right) x^0+\dfrac{\beta^2}{2}x^1
+\sum_{k=2}^n\beta_k x^k,\\
x'^1&= -\dfrac{\beta^2}{2}x^0+\left(1-\dfrac{\beta^2}{2}\right)x^1
-\sum_{k=2}^n\beta_k x^k,\\
x'^i&= x^i+\beta_i(x^0+x^2),&(i=2,3,\dots,n)
\end{align*}
where $\beta^2=\sum_{k=2}^n(\beta_k)^2$, $F$ transforms as
\begin{align*}
F\longmapsto adx^0dx^1+\sum_{  2\le i<j\le n}H_{ij}dx^idx^j
+\sum_{i=2}^n\left[\sum_{j=2}^n(a\delta_{ij}+H_{ij})\beta_j+g_i\right](dx^0+dx^1)dx^j.
\end{align*}
Noting that $(a\delta_{ij}+H_{ij})$ constitutes a regular matrix,
the third term above can be set to zero, by appropriately chosing
$\beta_j$.

Finally, by an $O_{n-1}$ transformation preserving $\sum_{k=2}^n(x^k)^2$,
the second term above can be made into the block diagonal form as
\begin{align*}
  \sum_{  2\le i<j\le n}H_{ij}dx^idx^j
&\longmapsto \sum_{i=1}^r b_idx^{2i}dx^{2i+1},
\end{align*}
such that $b_1\ge b_2\ge\dots\ge b_r>0$ holds, whenever it is nonzero.
\end{proof}

Then, the following Lemma  immediately follows. 
\begin{lem}\label{lem:surjective}
If  $\F\in \operatorname{End}(\minc{n})$ 
is surjective, then $n$ is odd and $F$ can be written as
\begin{align*}
  F=adx^0dx^1+\sum_{i=1}^{(n-1)/2}b_idx^{2i}dx^{2i+1},
\end{align*}
in an appropriate inertial coordinate system.
\end{lem}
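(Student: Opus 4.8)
The plan is to reduce everything to Lemma \ref{lem:nonzeroreal} together with the eigenvalue structure established above. First I would record that, since $\minc{n}$ is finite dimensional, surjectivity of $\F$ is equivalent to injectivity, i.e.\ to $\det\F\neq 0$, so that $0$ is not an eigenvalue of $\F$. Writing $\F=\eta^{-1}F$ with $\eta$ invertible, the rank of $\F$ equals the rank of the alternating matrix $F_{\mu\nu}$, which is always even; hence full rank $n+1$ forces $n+1$ to be even, that is, $n$ odd. This already delivers half of the statement and isolates the remaining task of producing the asserted normal form.

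The key step is to show that an invertible $\F$ must possess a nonzero real eigenvalue. Since $0$ is excluded and, by Lemma \ref{lem:eigenvalue}, every eigenvalue is real or pure imaginary, the only alternative is that all eigenvalues are nonzero and pure imaginary, and I would rule this out using the Lorentzian signature. For a pure imaginary eigenvalue $ib$ with eigenvector $u=p+iq$ ($p,q$ real), Corollary \ref{cor:null} gives $\eta(u,u)=0$, whence $\eta(p,p)=\eta(q,q)$ and $\eta(p,q)=0$, so the real $2$-plane $\langle p,q\rangle$ carries a scalar multiple of $\eta$; it cannot be totally null, because $\min{n}$ has a single negative direction and therefore admits no $2$-dimensional isotropic subspace, and for the same reason it cannot be negative definite, so it is spacelike. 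A short computation with the skew relation $\eta(\F\cdot,\cdot)=-\eta(\cdot,\F\cdot)$ shows that a nontrivial Jordan block for $ib$ would force $\eta(u,\overline u)=0$ and hence a $2$-dimensional isotropic plane, which is again impossible; thus $\F$ is diagonalizable over its imaginary eigenvalues. If there were no real eigenvalue, $\minc{n}$ would then decompose as an orthogonal sum of such spacelike planes, making $\eta$ positive definite and contradicting the signature $(n,1)$. Therefore $\F$ has a nonzero real eigenvalue $a$.

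With a nonzero real eigenvalue in hand, Lemma \ref{lem:nonzeroreal} applies and brings $F$, in a suitable inertial coordinate system, to the form $F=a\,dx^0dx^1+\sum_{k=1}^{r}b_k\,dx^{2k}dx^{2k+1}$ with $b_1\ge\cdots\ge b_r>0$. It remains only to fix $r$. In this form the coordinates that occur are exactly $x^0,\dots,x^{2r+1}$; if $2r+1<n$, then for any index $j$ with $2r+1<j\le n$ the form $F$ contains no $dx^j$, whence $\F\partial_j=0$ and $\F$ fails to be injective. Surjectivity therefore forces $2r+1=n$, i.e.\ $r=(n-1)/2$ (again exhibiting $n$ as odd), and yields precisely $F=a\,dx^0dx^1+\sum_{i=1}^{(n-1)/2}b_i\,dx^{2i}dx^{2i+1}$, as claimed.

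The main obstacle is the middle paragraph: establishing that invertibility forces a nonzero real eigenvalue. Everything there rests on the single structural fact that $\min{n}$ has exactly one timelike direction, so that no real $2$-plane can be totally isotropic; this simultaneously pins the imaginary eigenplanes to be spacelike, forbids the offending Jordan blocks, and produces the contradiction in the all-imaginary case. Once this is secured, both the parity of $n$ and the explicit normal form drop out routinely from Lemma \ref{lem:nonzeroreal} and the rank count.
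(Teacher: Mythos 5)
Your proof is correct, and it reaches the two pivotal intermediate facts by a genuinely different route from the paper. For the parity of $n$, the paper counts roots of the characteristic polynomial $p(x)=\operatorname{det}(xI-\F)$: pure imaginary roots come in conjugate pairs and, by Lemma \ref{lem:realeigenvalue}, there are at most two nonzero real eigenvalues, so $n+1=\deg p$ must be even; you instead invoke the even rank of a real alternating matrix, so that full rank already forces $n+1$ even --- shorter and cleaner. For the existence of a nonzero real eigenvalue (the crux), the paper again argues through $p$: if every root were pure imaginary, then $p(0)=\prod_i b_i^2>0$, whereas $p(0)=\operatorname{det}\F=-\operatorname{det}F\le 0$ because $\operatorname{det}F=(\operatorname{Pf}F)^2\ge 0$ and $\operatorname{det}\eta^{-1}=-1$; contradiction. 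You argue geometrically instead: each pure imaginary eigenvector $u$ spans with $\overline u$ a real $2$-plane that must be spacelike since $\min{n}$ admits no isotropic or negative definite real $2$-planes, the same obstruction excludes nontrivial Jordan blocks (your computation forcing $\eta(u,\overline u)=0$ is right), and an all-imaginary spectrum would then make $\eta$ positive definite, contradicting the signature. Both arguments are sound; the paper's determinant computation is quicker, while yours exposes the geometric mechanism and is the natural continuation of Lemmas \ref{lem:ortho}--\ref{lem:realnull}. One step you should make explicit: to obtain the claimed decomposition of $\minc{n}$ into \emph{mutually orthogonal} spacelike planes when an eigenspace $V_{ib}$ has dimension greater than one, you must diagonalize the nondegenerate Hermitian pairing $(u,u')\mapsto\eta(u,\overline{u'})$ on $V_{ib}$, since Lemma \ref{lem:ortho} only gives orthogonality between eigenvectors whose eigenvalues do not sum to zero; this is standard linear algebra, but it is not automatic from what you wrote. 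Finally, your closing observation that injectivity forces $2r+1=n$, hence $r=(n-1)/2$, usefully spells out what the paper leaves tacit when it says the lemma ``immediately follows'' from Lemma \ref{lem:nonzeroreal}.
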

\begin{proof}
  Since $\operatorname{ker}\F=\{0\}$, every eigenvalue of $\F$ is
  nonzero, real or pure imaginary number by Lemma \ref{lem:eigenvalue}.

  Consider the characteristic polynomial
  \begin{align*}
    p(x)=\operatorname{det}(xI-\F).
  \end{align*}
  Since this is a real polynomial, the number of pure imaginary roots counting multiplicity  is even.
  According to Lemma \ref{lem:realeigenvalue},
  the number of nonzero real eigenvalues is 0 or 2.
  Hence the degree of $p(x)$, which is $n+1$,  must be even.

  If $p(x)$ does not have no real roots, it is written as
  \begin{align*}
    p(x)=\prod_{i=1}^{(n+1)/2}(x^2+b_i^2),
  \end{align*}
  where every $b_i$ is real. However, it is impossible since
  \begin{align*}
    p(0)=\prod_{i=1}^{(n+1)/2}b_i^2>0,
  \end{align*}
  while
  \begin{align*}
    p(0)=\operatorname{det}\F=
    -\operatorname{det}F\le 0.
  \end{align*}
  Hence, $\F$ has nonzero real roots.
 Then, the statement of the Lemma immediately follows from
  Lemma \ref{lem:nonzeroreal}.
\end{proof}

Now we are in a position to classify the canonical forms of the
constant 2-form $\F$, according to the causal nature of
$\operatorname{ker}\F$.
Since $\F$ is real, when $u\in \operatorname{ker}\F$, 
its complex conjugate also belongs to $\operatorname{ker}\F$.
This implies that $\operatorname{ker}\F$ is the complexfication
of a subspace of $\min{n}$.
We first classify 
a subspace of $\min{n}$ according to its causal nature.
\begin{dfn}[subspace of $\min{n}$]
  A nonzero subspace of $\min{n}$
is classified into the following types.
\begin{enumerate}
\item A subspace generated by mutually orthogonal,
a real timelike vector and $m$ real spacelike vectors
is called an $(m+1)$-dimensional timelike subspace, where $m=0,1,\dots,n$.
\item A subspace generated by  mutually orthogonal,
$m$ real spacelike vectors
is called an $m$-dimensional spacelike subspace, where $m=1,2,\dots,n$.
\item A subspace generated by mutually orthogonal,
 a real null vector and
$m$ real spacelike vectors
is called an $(m+1)$-dimensional null subspace, where $m=0,1,\dots,n-1$.
\end{enumerate}
\end{dfn}

According to the causal nature of $\operatorname{ker}\F$,
the constant 2-form $F$ can be classified as follows.

\begin{prop}\label{prop:spacelike}
  If $\operatorname{ker}\F$ is a complexified spacelike subspace of $\min{n}$,
a constant 2-form $F$ is brought into the form
\begin{align*}
  F&=adx^0dx^1
&(a>0)
\end{align*}
or
\begin{align*}
  F&=adx^0dx^1+\sum_{i=1}^r b_idx^{2i}dx^{2i+1},
&(a>0,b_1\ge b_2\ge\dots\ge b_r>0)
\end{align*}
under an action of the Lorentz group $O_{n,1}$.
\end{prop}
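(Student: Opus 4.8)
The plan is to reduce everything to the situation already settled in Lemma \ref{lem:nonzeroreal} by producing a nonzero real eigenvalue of $\F$. The whole reduction hinges on one observation: the restriction of $\F$ to the orthogonal complement of its kernel is a \emph{bijective} $\eta$-skew operator on a Lorentzian subspace, which is precisely the surjective case treated in Lemma \ref{lem:surjective}. First I would use the skew-symmetry $\eta(\F v,w)=-\eta(v,\F w)$ (the identity already exploited in Lemma \ref{lem:ortho}): taking $w\in\ker\F$ gives $\eta(\F v,w)=0$ for every $v$, hence
\begin{align*}
  \operatorname{im}\F\subseteq(\ker\F)^\perp=:V.
\end{align*}
In particular $\F(V)\subseteq V$, so $\F$ restricts to an endomorphism $\F|_V$. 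Writing $\ker\F$ as the complexification of a real spacelike subspace $W$ of dimension $m$, the form $\eta|_W$ is positive definite, hence nondegenerate, so $W\cap W^\perp=0$; therefore $\ker(\F|_V)=V\cap\ker\F=0$ and $\F|_V$ is injective, hence bijective on the finite-dimensional $V$.

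Next I would identify $V$ as a Minkowski space in its own right. As the $\eta$-orthogonal complement of the spacelike $W$, the subspace $V$ inherits a metric of signature $(n-m,1)$, i.e. it is a copy of $\mathbb{R}^{n-m,1}$, and $\F|_V$ is a surjective $\eta|_V$-skew operator on it. Lemma \ref{lem:surjective}, applied to $V$ in place of $\min{n}$, then forces $\dim V$ to be even and, what matters here, guarantees a nonzero real eigenvalue $a$ of $\F|_V$; since $V$ is $\F$-invariant, $a$ is a nonzero real eigenvalue of $\F$ itself. (Concretely the determinant/parity argument of Lemma \ref{lem:surjective} transfers verbatim: if the characteristic polynomial of $\F|_V$ had only pure imaginary roots its value at $0$ would be positive, contradicting $\det(\F|_V)=-\det(F|_V)<0$, which holds by bijectivity.)

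With a nonzero real eigenvalue in hand, Lemma \ref{lem:nonzeroreal} applies directly to $F$ on the full space $\min{n}$ and yields, after an $O_{n,1}$ action, either $F=a\,dx^0dx^1$ or $F=a\,dx^0dx^1+\sum_{k=1}^r b_k\,dx^{2k}dx^{2k+1}$ with $b_1\ge\cdots\ge b_r>0$. To normalize the sign I would apply the reflection $x^1\mapsto-x^1$, which lies in $O_{n,1}$, sends $a\mapsto-a$, and leaves each term $dx^{2k}dx^{2k+1}$ (all indices $\ge 2$) untouched; thus $a>0$ can always be arranged, giving exactly the two claimed forms.

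The main obstacle, and really the only nontrivial point, is the reduction step itself: recognizing that $\eta$-skewness together with spacelikeness of the kernel makes $\F|_{(\ker\F)^\perp}$ bijective so that the surjective case carries the argument. I would be careful to verify that $(\ker\F)^\perp$ genuinely has Lorentzian signature, which is exactly where the hypothesis enters: because $\ker\F$ is spacelike, the single timelike direction of $\min{n}$ survives in the complement, and it is this that licenses the use of Lemma \ref{lem:surjective} rather than its definite analogue.
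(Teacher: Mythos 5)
Your proposal is correct and takes essentially the same route as the paper: both reduce to the $\F$-invariant orthogonal complement of $\operatorname{ker}\F$ (which the paper realizes in adapted coordinates and identifies with $\operatorname{Im}\F$), observe that $\F$ acts surjectively there on a Lorentzian subspace, and invoke Lemma \ref{lem:surjective}. The only difference is in the bookkeeping at the end: the paper takes the canonical form produced by Lemma \ref{lem:surjective} on that subspace directly, whereas you extract from it only the existence of a nonzero real eigenvalue and then re-apply Lemma \ref{lem:nonzeroreal} to $F$ on all of $\mathbb{R}^{n,1}$, which neatly avoids having to extend a Lorentz transformation of the subspace to the full space.
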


\begin{proof}
  We can take an inertial coordinate system for $\min{n}$, 
such that
$\partial_{m+2},\partial_{m+3},\dots,\partial_n$
generate $\operatorname{ker}\F$.
Then, $\partial_0,\partial_1,\dots,\partial_{m+1}$
generates $\operatorname{Im}\F$,
which is isomorphic to the quotient vector space $\minc{n}/\operatorname{ker}\F$,
and $\operatorname{Im}\F$ is identified with a complexified vector space $\minc{m}$.
In this coordinate system, $F$ takes the form
\begin{align*}
  F=\sum_{0\le i,j\le m+1}F_{ij}dx^i dx^j,
\end{align*}
and
$\F$ is regarded as a surjective linear operator on $\operatorname{Im}\F$.
By Lemma \ref{lem:surjective},
the complex dimension of $\operatorname{Im}\F$ should be even,  or $m=2r$ ($r=0,1,\dots$),
and $F$ can be brought into the form
\begin{align*}
  F=adx^0dx^1,
\end{align*}
when $r=0$, or otherwise
\begin{align*}
  F&=adx^0dx^1+\sum_{i=1}^r b_idx^{2i}dx^{2i+1},&
(b_1\ge b_2\ge\dots\ge b_r>0)
\end{align*}
by an $O_{2r+1,1}$ transformation. 
Furthermore, we can always make $a$ positive.
\end{proof}

\begin{prop}\label{prop:timelike}
  If $\operatorname{ker}\F$ is a complexified
timelike subspace of $\min{n}$,
a constant 2-form $F$ is brought into the form
\begin{align*}
  F&=\sum_{i=1}^rb_idx^{2i-1}dx^{2i},&
(b_1\ge b_2\ge\dots\ge b_r>0)
\end{align*}
under an action of the Lorentz group $O_{n,1}$.
\end{prop}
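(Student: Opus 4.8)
The plan is to run the argument of Proposition~\ref{prop:spacelike} in the complementary situation, in which the unique timelike direction is absorbed into $\operatorname{ker}\F$ so that the image becomes spacelike; this is exactly what removes the $dx^0dx^1$ term and leaves only Euclidean rotation blocks. First I would record the orthogonal splitting induced by $\F$. Since $\F$ is alternating with respect to $\eta$, for any $u\in\operatorname{ker}\F$ and any $v$ we have $\eta(u,\F v)=-\eta(\F u,v)=0$, so $\operatorname{Im}\F\perp\operatorname{ker}\F$. Because $\operatorname{ker}\F$ is the complexification of a timelike, hence nondegenerate, subspace, this orthogonality together with the rank--nullity identity $\dim\operatorname{ker}\F+\dim\operatorname{Im}\F=n+1$ makes the splitting $\minc{n}=\operatorname{ker}\F\oplus\operatorname{Im}\F$ direct. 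As $\F$ is the complexification of a real operator, both summands are complexifications of real subspaces of $\min{n}$; the real subspace underlying $\operatorname{ker}\F$ carries the timelike direction, so its orthogonal complement, underlying $\operatorname{Im}\F$, is spacelike, i.e. Euclidean.

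Next I would pass to adapted inertial coordinates. Set $2r:=\dim_{\mathbb{C}}\operatorname{Im}\F$ and choose an inertial coordinate system in which $\partial_0,\partial_{2r+1},\dots,\partial_n$ span the real kernel and $\partial_1,\dots,\partial_{2r}$ span the real image. Because $F_{\mu\nu}u^\nu=0$ for every $u\in\operatorname{ker}\F$, the contraction of $F$ with each kernel basis vector vanishes, so $F$ involves neither $dx^0$ nor $dx^{2r+1},\dots,dx^n$; that is, $F=\sum_{1\le i<j\le2r}F_{ij}\,dx^idx^j$ is supported entirely on the Euclidean block.

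Finally I would invoke the classical normal form. Since $\operatorname{ker}\F\cap\operatorname{Im}\F=\{0\}$, the restriction of $\F$ to $\operatorname{Im}\F$ is injective, hence an isomorphism, so $(F_{ij})_{1\le i,j\le2r}$ is a nonsingular real alternating matrix; nonsingularity forces $2r$ to be even and every eigenvalue to be nonzero. As the block is Euclidean it admits no real null vector, so by Lemma~\ref{lem:realnull} there is no nonzero real eigenvalue, whence by Lemma~\ref{lem:eigenvalue} the eigenvalues are the conjugate pure-imaginary pairs $\pm ib_1,\dots,\pm ib_r$. An $O_{2r}$ transformation preserving $\sum_{k=1}^{2r}(x^k)^2$, sitting inside $O_{n,1}$ as a map fixing $x^0,x^{2r+1},\dots,x^n$, brings the alternating matrix to block-diagonal form, yielding $F=\sum_{i=1}^r b_i\,dx^{2i-1}dx^{2i}$; permuting blocks and applying a reflection in a block to correct signs arranges $b_1\ge b_2\ge\dots\ge b_r>0$.

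I expect the only genuinely delicate step to be the causal bookkeeping of the first two paragraphs: one must be sure that placing the timelike direction inside $\operatorname{ker}\F$ really leaves a nondegenerate Euclidean image, so that the splitting is direct and the image contains no null directions. This is precisely what distinguishes the present case from Proposition~\ref{prop:spacelike} and guarantees the absence of any $dx^0dx^1$ term. Once that is settled, the reduction is the standard orthogonal block-diagonalization of an alternating form on a Euclidean space and needs no signature-specific input.
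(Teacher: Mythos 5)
Your proposal is correct and takes essentially the same approach as the paper: pass to inertial coordinates adapted to $\operatorname{ker}\F$ (which contains the timelike direction), note that $F$ is then supported purely on the spacelike complement, and block-diagonalize the resulting alternating matrix by an orthogonal transformation fixing the kernel directions. The extra steps you supply --- the direct splitting $\minc{n}=\operatorname{ker}\F\oplus\operatorname{Im}\F$, the nonsingularity of the restriction to the image, and the pure-imaginary eigenvalue structure --- are justifications of details that the paper's much shorter proof takes for granted.
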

\begin{proof}
  We can always choose an inertial coordinate system
such that $\partial_0,\partial_{m+1},\partial_{m+2},\dots,\partial_n$
generate $\operatorname{ker}\F$ 
(of course, we take $\operatorname{ker}\F
=\mathbb{C}\partial_0$ if the kernel is 1-dimensional). 
Then, $F$ takes the form
\begin{align*}
  F=\sum_{1\le \mu<\nu\le m}F_{\mu\nu}dx^\mu dx^\nu.
\end{align*}
Hence it can be brought into the form
\begin{align*}
  F&=\sum_{i=1}^rb_idx^{2i-1}dx^{2i},&
(b_1\ge b_2\ge\dots\ge b_r>0)
\end{align*}
by an $O_m$ transformation.
\end{proof}

\begin{prop}\label{prop:null}
  If $\operatorname{ker}\F$ is a complexified null subspace of $\min{n}$,
a constant 2-form $F$ is brought into the form
\begin{align*}
  F&=dx^0dx^1+dx^1dx^2
\end{align*}
or
\begin{align*}
  F&=dx^0dx^1+dx^1dx^2+\sum_{i=1}^rb_idx^{2i+1}dx^{2i+2},&
(b_1\ge b_2\ge\dots\ge b_r>0)
\end{align*}
under an action of the Lorentz group $O_{n,1}$.
\end{prop}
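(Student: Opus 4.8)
The strategy mirrors the timelike and spacelike cases: choose an inertial coordinate system adapted to $\operatorname{ker}\F$, write down the most general $F$ compatible with that kernel, and then kill off the remaining freedom by a residual Lorentz transformation. Since $\operatorname{ker}\F$ is the complexification of a null subspace, by the definition of a null subspace I can pick a real null vector $\ell$ and $m$ mutually orthogonal real spacelike vectors spanning the underlying real subspace. I would choose inertial coordinates in which $\ell=\tfrac{1}{\sqrt2}(\partial_0-\partial_1)$ and the spacelike generators are $\partial_3,\dots,\partial_{m+2}$ (reserving $\partial_2$ so that the null direction has a conjugate partner inside $\operatorname{Im}\F$). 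The condition $\F\ell=0$ together with $\F\partial_j=0$ for the spacelike generators forces many components of $F$ to vanish, just as $\F v=av$ pinned down $f$ and $h_i$ in the proof of Lemma \ref{lem:nonzeroreal}.

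First I would record the general form of $F$ in the null-frame basis $dx^0\pm dx^1$, $dx^2$, and the spacelike differentials, and impose the kernel conditions to reduce it. The key qualitative point is that a null kernel signals a nontrivial Jordan block: $\F$ is nilpotent on the two-dimensional null subspace spanned by $\ell$ and a transverse null vector, which is exactly what produces the characteristic $dx^0dx^1+dx^1dx^2$ piece rather than a diagonalizable skew block. Concretely, I expect the surviving $F$ to split as a ``nilpotent'' part living in the span of $\partial_0,\partial_1,\partial_2$ plus a genuinely elliptic part $\sum_i b_i\,dx^{2i+1}dx^{2i+2}$ built from the remaining spacelike directions on which $\F$ acts invertibly.

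Second, I would normalize. The nilpotent part can be brought to $dx^0dx^1+dx^1dx^2$ by rescaling $\ell$ (absorbing the overall coefficient into the normalization of the null vector, which is the freedom a boost along the null direction provides) and by a transformation analogous to the $\beta_j$ shift in Lemma \ref{lem:nonzeroreal} that removes the cross terms coupling the null block to the transverse spacelike directions. The elliptic remainder is then block-diagonalized and ordered $b_1\ge\dots\ge b_r>0$ by an orthogonal rotation $O_{2r}$ of the transverse spacelike plane, exactly as in Propositions \ref{prop:spacelike} and \ref{prop:timelike}. The dichotomy in the statement (with or without the $b_i$ sum) is just the distinction between $r=0$ and $r>0$.

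The main obstacle I anticipate is the first reduction step: showing that after imposing the null-kernel conditions the residual coupling between the nilpotent null block and the transverse elliptic block can actually be gauged away by an admissible Lorentz transformation. This is the analogue of checking that $(a\delta_{ij}+H_{ij})$ is a regular matrix in Lemma \ref{lem:nonzeroreal}; here the relevant operator is nilpotent rather than invertible, so the cross terms must be eliminated by a parabolic (null-rotation) element of $O_{n,1}$ rather than by a boost, and verifying that such an element exists and acts as claimed is where the real work lies. Once the decoupling is established, the normalization of the coefficients is routine.
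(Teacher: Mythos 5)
Your overall skeleton (adapt coordinates to the null kernel vector, use a parabolic null rotation on the couplings between the null direction and the transverse directions, block-diagonalize the elliptic remainder by a rotation, normalize by a boost) is the same as the paper's, but the step you yourself flag as the ``main obstacle'' is a genuine gap, and your stated expectation of how it resolves is wrong. After imposing only the kernel conditions, the general form is
\begin{align*}
F=\sum_{i\ne 0,2}g_i\,(dx^0-dx^2)\wedge dx^i
+\tfrac12\sum_{i,j\ne 0,2}H_{ij}\,dx^i\wedge dx^j
\end{align*}
(in the paper's convention $\ell=\partial_0+\partial_2$), and a null rotation fixing $\ell$ leaves $H$ invariant while shifting $g=\sum_{i\ne 0,2} g_i dx^i$ precisely by elements of $\operatorname{Im}H$. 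So the cross terms can \emph{never} all be gauged away --- only their $\operatorname{Im}H$-part can. Indeed, if they could all be removed, you would be left with $F=H$, a purely rotational 2-form whose kernel contains $\partial_0$ and is therefore timelike, contradicting the hypothesis. Your proposed analogy with the regularity of $(a\delta_{ij}+H_{ij})$ in Lemma \ref{lem:nonzeroreal} therefore cannot be made to work: the relevant operator $H$ is genuinely singular, and this is not a technical nuisance to be ``verified'' away but the essential structure of the null case.

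The missing idea, which is the heart of the paper's proof, is that the null-kernel hypothesis forces $g\notin\operatorname{Im}H$: if $g_i=\sum_{j}H_{ij}v^j$ for some $v$, then $u=\partial_0+\sum_{k\ne 0,2}v^k\partial_k$ lies in $\operatorname{ker}\F$, yet $\eta(\ell,u)=-1\ne 0$, which is impossible since every vector of a null subspace is orthogonal to its (unique) null direction $\ell$. Granting this, the proof closes: the null rotation kills exactly the $\operatorname{Im}H$-component of $g$; the remainder is nonzero, hence can be rotated onto $dx^1$ by an orthogonal transformation of the directions complementary to $\operatorname{Im}H$; and a boost in the $(x^0,x^2)$-plane rescales it to $1$, producing $(dx^0-dx^2)\wedge dx^1=dx^0dx^1+dx^1dx^2$. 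In other words, the nilpotent block is not something that sits in the span of $\partial_0,\partial_1,\partial_2$ from the start and ``survives'' the elimination of cross terms, as your plan has it --- it \emph{is} the irremovable cross term, and identifying why it is irremovable (and nonzero) is exactly the content of the proposition.
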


\begin{proof}
  Since $\operatorname{ker}\F$ is a complexified null subspace of $\min{n}$,
$\F$ has an real null eigenvector corresponding to the eigenvalue 0.
We choose an inertial coordinate system for $\min{n}$, such that
$\ell=\partial_0+\partial_2$ is the null eigenvector.

In this coordinate system,
$F$ generally takes the form
\begin{align*}
  F=\sum_{i\ne 0,2}g_i(dx^0-dx^2)dx^i
+\dfrac{1}{2}\sum_{i,j\ne 0,2}H_{ij}dx^idx^j,
\end{align*}
where $h_{ij}=-h_{ji}$.
This is the  general form of the 2-form, such that $\ell\in \operatorname{ker}\F$.
Furthermore, in order for $\operatorname{ker}\F$ to be the complexified
null subspace of $\min{n}$,
the 1-form $g=\sum_{k\ne 0,2}g_kdx^k$ should not be contained in
 $\operatorname{Im}H$, where $H=(1/2)\sum_{i,j\ne 0,2}H_{ij}dx^idx^j$
is regarded as a linear operator from the tangent vector space
into the space of 1-forms. The reason is as follows.

 Suppose that there is a vector $v=\sum_{k\ne 0,2}v^k\partial_k$ such that
$g_i=\sum_{j\ne 0,2}H_{ij}v^j$ holds for $i\ne 0,2$.
Then, it is easily confirmed that
$u=\partial_0+\sum_{k\ne 0,2}v^k\partial_k$ belongs to $\operatorname{ker}\F$.
On the other hand, since $\operatorname{ker}\F\cap \min{n}$ is
a null hypersurface in $\min{n}$, every pair of 
vectors in $\operatorname{ker}\F$ should be mutually orthogonal.
Hence it leads to a contradiction, since $\ell$, $u\in \operatorname{ker}\F$,
while $\eta(\ell,u)=-1\ne 0$.

In particular, $H$ is not surjective as a linear operator on
the $(x^1,x^3,x^4,\dots,x^n)$-plane, so that by $O_{n-1}$ transformation
on this plane, it may be brought into the form
\begin{align*}
  H&=\dfrac{1}{2}\sum_{i=1}^rb_i dx^{2i+1}dx^{2i+2},
&(b_1\ge b_2\ge\dots\ge b_r>0)
\end{align*}
whenever it is not zero.

Under the Lorentz transformation
\begin{align*}
  x'^0&=\left(1+\dfrac{\beta^2}{2}\right)x^0-\dfrac{\beta^2}{2}x^2
+\sum_{i\ne 0,2}\beta_i x^i,\\
  x'^2&=\dfrac{\beta^2}{2}x^0+\left(1-\dfrac{\beta^2}{2}\right)x^2
+\sum_{i\ne 0,2}\beta_i x^i,\\
x'^i&=x^i+\beta_i(x^0-x^2),&(i\ne 0,2)
\end{align*}
where $\beta^2=\sum_{k\ne 0,2}(\beta_k)^2$, 
$H$ is invariant, while $g$ transforms as
  \begin{align*}
g\longmapsto g=\sum_{i\ne 0,2} \left(g_i- \sum_{j\ne 0,2}H_{ij}  \beta_j   \right) dx^i.
  \end{align*}
In other words, $g$ can be shifted by any vector in $\operatorname{Im}H$.
Hence by appropriately choosing $\beta_i$, $g$ can be brought into the
form
\begin{align*}
  g=g_1dx^1+\sum_{i=2r+3}^n g_i dx^i,
\end{align*}
where the second term in the right hand side may be absent
when $2r+2=n$.
Note that this is not zero, since otherwise $g\in \operatorname{Im}H$.
Then, by an $\so{n-2r-1}$ (or $\mathbb{Z}_2$ if $2r+2=n$)
transformation  preserving
$(x^1)^2+\sum_{i=2r+3}^n(x^i)^2$, it becomes
\begin{align*}
 g&\longmapsto \| g\|x^1.
&\left(\|g\|=\sqrt{(g_1)^2+\sum_{i=2r+3}^n(g_i)^2}\right)
\end{align*}

At present,  the 2-form $F$ can be written in the form
\begin{align*}
  F=\|g\|(dx^0-dx^2)dx^1
+\sum_{i=1}^rb_idx^{2i+1}dx^{2i+2}.
\end{align*}

Finally, by the Lorentz transformation
  \begin{align*}
x'^0-x'^2&=\|g\|(x^0-x^2),\\    
x'^0+x'^2&=\|g\|^{-1}(x^0+x^2),\\    
x'^i&=x^i,&(i\ne 0,2)
  \end{align*}
$\|g\|$ in $F$ can be made 1. 
So, finally we get
  \begin{align*}
F=dx^0dx^1-dx^1dx^2,
  \end{align*}
when $H_{ij}=0$, or otherwise
  \begin{align*}
F&=dx^0dx^1-dx^1dx^2+\sum_{i=1}^r b_idx^{2i+1}dx^{2i+2}.
&(b_1\ge b_2\ge\dots\ge b_r>0)
  \end{align*}
\end{proof}

From Propositions \ref{prop:spacelike}, \ref{prop:timelike}, and
\ref{prop:null},
we obtain the classfication Theorem for  constant 2-forms.
\begin{thm}\label{thm:F}
  Let $F$ be a constant 2-form on $\min{n}$.
Under an action of the Lorentz group $O_{n,1}$,
$F$ can be brought into one of following forms
\begin{enumerate}
  \renewcommand{\labelenumi}{(\alph{enumi})}
\item $u=(0,\dots,0)$
  \begin{align*}
  F&=0
  \end{align*}

\item $u=(0,b_1,0,b_2,\dots,0,b_r,0,0,\dots,0)$
\begin{align*}
F&=\sum_{i=1}^r b_idx^{2i-1}dx^{2i}
&(b_1\ge b_2\ge\dots\ge b_r>0)
\end{align*}

\item $u=(a,0,0,\dots,0)$
  \begin{align*}
    F&=adx^0dx^1
&(a>0)
  \end{align*}

\item $u=(a,0,b_1,0,b_2,0,\dots,b_r,0,0,\dots,0)$
  \begin{align*}
    F&=adx^0dx^1+\sum_{i=1}^r b_idx^{2i}dx^{2i+1}
&(a>0,b_1\ge b_2\ge\dots\ge b_r>0)
  \end{align*}

\item $u=(1,1,0,0,\dots,0)$
  \begin{align*}
F&=dx^0dx^1+dx^1dx^2    
  \end{align*}

\item $u=(1,1,0,b_1,0,b_2,\dots,0,b_r,0,0,\dots,0)$
  \begin{align*}
    F&=dx^0dx^1+dx^1dx^2    
+\sum_{i=1}^r b_i dx^{2i+1}dx^{2i}
&(b_1\ge b_2\ge\dots\ge b_r>0)
  \end{align*}
\end{enumerate}
and it gives a canonical classification of constant 2-forms.  
\end{thm}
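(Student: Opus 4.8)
The plan is to read Theorem \ref{thm:F} off as the assembly of the three preceding propositions, organized entirely by the causal type of $\operatorname{ker}\F$. The starting point, already recorded above, is that $\operatorname{ker}\F$ is stable under complex conjugation and is therefore the complexification of a genuine real subspace $W\subseteq\min{n}$. The classification then proceeds by exhausting the possible causal natures of $W$, so that no case analysis beyond the trichotomy of subspaces is needed.

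First I would dispose of the degenerate extreme $W=\{0\}$. Here $\F$ is surjective, so Lemma \ref{lem:surjective} applies verbatim and forces $n$ odd together with $F=adx^0dx^1+\sum_{i=1}^{(n-1)/2}b_idx^{2i}dx^{2i+1}$; since surjectivity excludes the eigenvalue $0$, every $b_i$ is strictly positive, and this is precisely case (d) with the maximal value of $r$.

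For $W\ne\{0\}$ I would invoke the trichotomy built into the Definition above: every nonzero subspace of a Lorentzian vector space is either spacelike, timelike, or null, according to whether the restriction of $\eta$ is positive definite, nondegenerate indefinite, or degenerate, and these three types are mutually exclusive. The three regimes are then handled directly by the propositions. If $W$ is spacelike, Proposition \ref{prop:spacelike} gives case (c) when $r=0$ and case (d) otherwise. If $W$ is timelike, Proposition \ref{prop:timelike} gives $F=\sum_{i=1}^r b_idx^{2i-1}dx^{2i}$, which is case (b), degenerating to case (a) exactly when $r=0$, i.e. $F=0$ and $W=\min{n}$. If $W$ is null, Proposition \ref{prop:null} gives case (e) when no rotational blocks are present and case (f) otherwise; here I would only observe that the residual sign and index conventions between that proposition and the tabulated forms are absorbed by an obvious reflection such as $x^2\mapsto -x^2$ together with a relabeling of the spacelike coordinates, all of which lie in $O_{n,1}$. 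Matching each resulting $F$ against its displayed array $(u_0,\dots,u_{n-1})$ then completes the enumeration.

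The only point requiring genuine care is the assertion that the list is canonical rather than merely exhaustive, i.e. that no two distinct entries are Lorentz equivalent. For this I would exhibit Lorentz invariants separating the cases: the causal type of $\operatorname{ker}\F$ already distinguishes the timelike family (a),(b) from the spacelike family (c),(d) from the null family (e),(f); within each family the unordered spectrum of $\F$, namely the real pair $\pm a$ and the imaginary pairs $\pm ib_k$ whose reality and pairing structure are controlled by Lemmas \ref{lem:eigenvalue} and \ref{lem:realeigenvalue}, together with the descending normalization $b_1\ge\cdots\ge b_r>0$ and $a>0$, pins down the parameters uniquely. The main obstacle is therefore not the assembly, which is immediate, but verifying that these invariants are complete, in particular that the causal type of the kernel and the spectrum of $\F$ leave no residual freedom once the sign and ordering conventions are fixed.
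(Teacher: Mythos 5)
Your proposal is correct and follows essentially the same route as the paper: there, Theorem \ref{thm:F} is obtained in a single line by assembling Propositions \ref{prop:spacelike}, \ref{prop:timelike}, and \ref{prop:null} according to the causal type of $\operatorname{ker}\F$, exactly as you do. Your additions---the explicit disposal of the trivial-kernel case via Lemma \ref{lem:surjective}, the sign fix $x^2\mapsto -x^2$ reconciling the null case with the tabulated form, and the two-tier invariant argument (kernel causal type first, then the spectrum of $\F$) for pairwise inequivalence---are sound and in fact supply details the paper leaves implicit, since it asserts canonicity without proof.
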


 
\section{Canonical classification of Killing vector field}
Here, we enumerate all canonical form of the Killing vector field
in $\min{n}$.
The result seems to be already known by some authors, see e.g. \cite{FS01},
whereas the complete classification and its reasoning
do not seem to appear in the literature.
Here, we give the canonical classification for self-containedness.

A Killing vector field can be brought into the form
\begin{align*}
  \xi=F_{\mu\nu}x^\mu dx^\nu+f_\nu dx^\nu,
\end{align*}
where $F_{\mu\nu}$ is one of canonical forms in Theorem \ref{thm:F}.
In order to find the canonical form of $\xi$, we study the
transformation of $f_\nu$ under the subgroup $G_F$ of the  Poincar\'e group
$\poin{n}$ that preserves $F_{\mu\nu}$.

\begin{dfn}
For given canonical 2-form $F$,  we denote by $G_F$ the subgroup
of $\poin{n}$
that leaves $F$ invariant, i.e.,
\begin{align*}
  G_F=\left\{g_{(\varLambda,c)}\in\poin{n}\bigl|
\varLambda_\mu{}^\lambda \varLambda_\nu{}^\rho F_{\lambda\rho}=F_{\mu\nu}\right\}.
\end{align*}
\end{dfn}

Under the action of $g_{(\varLambda,c)}\in G_F$, the constant 1-form $f$ transforms as
\begin{align*}
g_{(\Lambda,c)} (f)&=\left(\varLambda_\mu{}^\nu f_\nu+F_{\mu\nu}c^\nu\right) dx^\mu.
\end{align*}
In particular, we may always shift $f$ by a 1-form $Fc$ belonging to $\operatorname{Im}F$, where $F$ is regarded as a linear operator from the 
real tangent space to the space of real 1-forms.

By Theorem \ref{thm:F}, the matrix representation of $F$
in canonical form is
a block diagonal matrix consisting of 
the following types of blocks:
\begin{itemize}
\item $S$-block
  \begin{align*}
S&=\left(
  \begin{array}{cc}
    0&s\\
-s&0
  \end{array}\right),
\end{align*}

\item $N$-block
\begin{align*}
N=\left(
  \begin{array}{ccc}
    0&1&0\\
-1&0&1\\
0&-1&0
  \end{array}\right),
  \end{align*}

\item $O$-block

 \begin{align*}
O&=\left(\,
\begin{array}{cccc}
0&0&\dots&0\\
0&0&\dots&0\\
\vdots&\vdots&\ddots&\vdots\\
0&0&\dots&0
\end{array}
\,\right)    
  \end{align*}
\end{itemize}
where $s>0$.

For given $F$, let $W_S$ be the image of the $S$-block part, i.e.,
if 
\begin{align*}
F=adx^0dx^1+b_1dx^2dx^3+\dots+b_r dx^{2r}dx^{2r+1},  
\end{align*}
then
\begin{align*}
  W_S=\bigoplus_{i=0}^{2r+1} \mathbb{R}dx^i,
\end{align*}
and if 
\begin{align*}
F=dx^0dx^1+dx^1dx^2+b_1dx^3dx^4+\dots+b_r dx^{2r+1}dx^{2r+2},  
\end{align*}
then
\begin{align*}
W_S=\bigoplus_{i=3}^{2r+2} \mathbb{R}dx^i,
\end{align*}
and so on, and let $W_N$ be its orthogonal complement of $W_S$
with respect to $\eta$.
Hence any 1-form $f$ can be written  as
$f=f_S+f_N$, $(W_S\in W_S, f_N\in W_N)$
according to this direct sum decomposition.

Under the translation $x'^\mu=x^\mu+c^\mu$,
the constant 1-form $f$ transforms as $f\mapsto f+Fc$, 
so that $f_S$ can always be made zero.
Hence the problem reduce to find out all
the canonical forms of $f_N$ under the
action of $G_F$.

The $N$-block and $O$-block part of $F$ can be regarded as
the linear operator $F_N:V_N\to W_N$, where
\begin{align*}
  V_N=\eta^{-1}(W_N)=\left\{v\in \min{n}\bigl|\eta(v)\in W_N\right\},
\end{align*}
and $F_N$ is the restriction of $F$ on $V_N$.
Then, the following cases appear.

\begin{itemize}
\item $V_N$ is $(x^s,x^{s+1},\dots, x^n)$ space, where $s=2,3,\dots,n$:

$F_N$ is zero on $V_N$, and $f_N$ can be brought into the form
$f_n dx^n$ $(f_n\ge 0)$ by an $O_{n-s}$ transformation.

\item $V_N$ is $(x^0,x^s,x^{s+1},\dots,x^n)$ space, where $s=1,3,4,\dots,n$,
and $F_N=O$:

By a Lorentz transformation over $W_N$, $f_N$ can be made
into the form
$f_0dx^0$ $(f_0\le 0$), $f_n dx^n$ $(f_n>0)$, or $-dx^0+dx^n$,
according to its causal property.

\item $V_N$ is $(x^0,x^1,x^2)$ space and $F_N=N$:

$f_N$ can be made $f_0 dx^0$  by a translation.
Then, $f_0$ can be made nonpositive by $x^0\mapsto -x^0$,
if necessary.

\item $V_N$ is $(x^0,x^1,x^2,x^s,x^{s+1},\dots,x^n)$ space,
where $s=3,4,\dots,n$,
and $F_N=N\oplus O$:

Firstly, any $f_N$ can be broughot into the form
$f_0dx^0+f_1dx^1+f_2 dx^2+f_n dx^n$ $(f_n\ge 0)$
by $O_{n-s-1}$ transformation over $(x^s,\dots,x^n)$ space.
Next, it can be made
into the form $f_0dx^0+f_ndx^n$ by a translation.
If $f_0\ne 0$, $f_n$ can be made zero, by a Lorentz transfomation
over
$(x^0,x^1,x^2,x^n)$ space. So that the final form of $f_N$ is
$f_0 dx^0$ $(f_0\le 0)$ or $f_n dx^n$ $(f_n>0)$.
\end{itemize}

We are now in a position to state the classification Theorem 
for Killing vector fields.
\begin{thm}\label{thm:killing}
  Each Killing vector field $\xi$ in $\min{n}$ is brought into 
  one of the following forms
  by the $\poin{n}$ transformation.
\begin{enumerate}
  \renewcommand{\labelenumi}{(\alph{enumi})}
\item
  $n\ge 1$
  \begin{align*}
    \xi=f_0dx^0,
    f_0< 0
  \end{align*}

\item
  $n\ge 1$
  \begin{align*}
    \xi=f_ndx^n,
    f_n>0
  \end{align*}
\item
  $n\ge 1$
  \begin{align*}
    \xi=-dx^0+dx^n,
  \end{align*}
  
\item
  $n\ge 1$
  \begin{align*}
    \xi=a(x^0dx^1-x^1dx^0),
    a> 0
  \end{align*}

\item
  $n\ge 2$, $r\le n/2$
    \begin{align*}
    \xi=\sum_{i=1}^r b_i (x^{2i-1}dx^{2i}-x^{2i}dx^{2i-1})+f_0 dx^0,\\
    b_1\ge b_2\ge\dots\ge b_r>0,
f_0\le 0
  \end{align*}

  \item
    $n\ge 2$
  \begin{align*}
    \xi=a(x^0dx^1-x^1dx^0)+f_ndx^n,
    a> 0,
    f_n> 0
  \end{align*}

\item
  $n\ge 2$
  \begin{align*}
    \xi=(x^0-x^2)dx^1-x^1(dx^0-dx^2)+f_0dx^0,
    f_0\le 0
  \end{align*}
\item
  $n\ge 3$,  $r< n/2$
  \begin{align*}
    \xi=\sum_{i=1}^r b_i (x^{2i-1}dx^{2i}-x^{2i}dx^{2i-1})+f_n dx^n,\\
    b_1\ge b_2\ge\dots\ge b_r>0,
f_n> 0
  \end{align*}

\item
  $n\ge 3$, $r< n/2$
  \begin{align*}
    \xi=\sum_{i=1}^r b_i (x^{2i-1}dx^{2i}-x^{2i}dx^{2i-1})-dx^0+dx^n,\\
    b_1\ge b_2\ge\dots\ge b_r>0
  \end{align*}

\item
  $n\ge 3$, $r\le (n-1)/2$
  \begin{align*}
  \xi=a(x^0dx^1-x^1dx^0)+\sum_{i=1}^r b_i (x^{2i}dx^{2i+1}-x^{2i+1}dx^{2i}),\\
  a>0,
  b_1\ge b_2\ge\dots\ge b_r>0
\end{align*}

\item
  $n\ge 3$
  \begin{align*}
    \xi=(x^0-x^2)dx^1-x^1(dx^0-dx^2)+f_ndx^n,
    f_n>0
  \end{align*}
\item
  $n\ge 4$,  $r<(n-1)/2$
\begin{align*}
  \xi=a(x^0dx^1-x^1dx^0)+\sum_{i=1}^r b_i (x^{2i}dx^{2i+1}-x^{2i+1}dx^{2i})
  +f_n dx^n,\\
  a>0,
  b_1\ge b_2\ge\dots\ge b_r>0,
f_n> 0
\end{align*}

\item
  $n\ge 4$, $r\le (n/2)-1$
     \begin{align*}
       \xi=(x^0-x^2)dx^1-x^1(dx^0-dx^2)+
       \sum_{i=1}^r b_i (x^{2i+1}dx^{2i+2} -x^{2i+2}dx^{2i+1} )
       +       f_0dx^0,\\
          b_1\ge b_2\ge\dots \ge b_r>0,
    f_0\le 0
  \end{align*}

   \item
     $n\ge 5$, $r<(n/2)-1$
     \begin{align*}
       \xi=(x^0-x^2)dx^1-x^1(dx^0-dx^2)+
       \sum_{i=1}^r b_i (x^{2i+1}dx^{2i+2} -x^{2i+2}dx^{2i+1} )
       +       f_ndx^n,\\
          b_1\ge b_2\ge\dots \ge b_r>0,
    f_n> 0
  \end{align*}

\end{enumerate}

\end{thm}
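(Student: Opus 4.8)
The plan is to take the six canonical forms of $F$ furnished by Theorem \ref{thm:F} as the starting point and, for each of them, to normalize the remaining translational part $f$ of $\xi = F_{\mu\nu}x^\mu dx^\nu + f_\nu dx^\nu$ under the residual stabilizer $G_F \subset \poin{n}$. The first reduction is the direct-sum splitting $f = f_S + f_N$ relative to $W_S \oplus W_N$. Since $F$ restricts to an isomorphism on each $S$-block $2$-plane, one has $W_S \subseteq \operatorname{Im} F$, so a translation $x'^\mu = x^\mu + c^\mu$ with $c$ supported on the $S$-blocks produces $f \mapsto f + Fc$ with $Fc \in W_S$ chosen to cancel $f_S$, leaving $f_N$ untouched. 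Thus everything reduces to classifying $f_N \in W_N$ under those elements of $G_F$ that keep $f_S = 0$.

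Next I would carry out the normalization of $f_N$ block by block, exactly along the lines of the case analysis recorded just before the theorem. On a purely spacelike $O$-block, $F_N$ vanishes and an $O_{n-s}$ rotation preserving $\sum_i (x^i)^2$ aligns $f_N$ with $dx^n$ and fixes its sign. When $W_N$ additionally contains the timelike axis $x^0$ while $F_N = O$, a Lorentz transformation on $W_N$ reduces $f_N$ to one of the three causal normal forms $f_0 dx^0$ (timelike), $f_n dx^n$ (spacelike), or $-dx^0 + dx^n$ (null), according to the sign of $\eta(f_N, f_N)$. For an $N$-block a translation removes the $dx^1, dx^2$ components, leaving $f_0 dx^0$ whose sign is fixed by $x^0 \mapsto -x^0$; and if an $N$-block is accompanied by an $O$-block, one first rotates the spacelike part into $dx^n$ and translates away $dx^1, dx^2$, and then, provided $f_0 \ne 0$, a boost in the $(x^0, x^1, x^2, x^n)$-subspace absorbs the $dx^n$ term, yielding either $f_0 dx^0$ or $f_n dx^n$.

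Finally I would assemble the normalized pairs $(F, f_N)$ into the fourteen families (a)--(n), reading off the admissible ranges of $n$ and $r$ from the number of coordinates the blocks occupy; for instance in case (n) the $N$-block fills $x^0, x^1, x^2$ and the $r$ $S$-blocks fill $2r$ further coordinates, so a nonzero $f_n dx^n$ forces $n \ge 2r + 3$, i.e. $r < (n/2) - 1$. The same count explains why some lines carry $f_0 \le 0$ rather than $f_0 < 0$: when a rotation part is present the value $f_0 = 0$ is a genuine, inequivalent orbit, whereas in the bare case (a) it would give $\xi = 0$. I expect the routine part to be the block-by-block reductions above; the real obstacle is the null sector, where the nilpotent $N$-block couples kernel, image, and translations non-diagonally. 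There one must check that the boost used to kill $f_n$ when $f_0 \ne 0$ neither disturbs $F$ nor revives an $f_S$ component, and, conversely, that when $f_0 = 0$ no further reduction of $f_n$ is available---this is precisely what keeps the timelike and spacelike null-sector cases distinct. Verifying exhaustiveness and pairwise inequivalence of the fourteen families, together with the orderings $b_1 \ge \dots \ge b_r > 0$, then completes the proof.
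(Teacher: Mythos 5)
Your proposal is correct and follows essentially the same route as the paper: the splitting $f=f_S+f_N$ with respect to $W_S\oplus W_N$, the elimination of $f_S$ by translations along $\operatorname{Im}F$, the same four-case normalization of $f_N$ according to whether $V_N$ is purely spacelike, contains the time axis with $F_N=O$, is an $N$-block, or an $N\oplus O$-block (including the boost that kills $f_n$ when $f_0\ne 0$), and the same dimension counting to fix the ranges of $n$ and $r$. Your added attention to exhaustiveness, pairwise inequivalence, and the check that the null-sector boost preserves $F$ and $f_S=0$ goes slightly beyond what the paper spells out, but the argument is the same.
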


\section{Complete integrability of cohomogeneity-1 string system}

As an application of Theorem \ref{thm:killing}, we show the
complete integrability of the cohomogeneity-1 Nambu-Goto string system
in flat spacetime.

We first describe the cohomogeneity-1 Nambu-Goto system.
The world sheet of a string is
a 2-dimensional timelike surface in $\min{n}$,
which is written as
\begin{align*}
  x^\mu=\varphi^\mu(s^0,s^1),~~~(\mu=0,1,\dots,n)
\end{align*}
where $\varphi^\mu$'s are smooth functions, and the parameters $(s^0,s^1)$
are called  world-sheet coordinates.
If the world-sheet of the string has the zero mean curvature, 
it is called the Nambu-Goto string.

The induced metric on the world-sheet
\begin{align*}
  G_{AB}=\eta_{\mu\nu}
  \dfrac{\partial \varphi^\mu}{\partial s^A}
  \dfrac{\partial \varphi^\mu}{\partial s^B}
\end{align*}
is called the world-sheet metric.
The equation of motion for the Nambu-Goto strings is
given by
\begin{align}
  \label{eq:flat-NGeq}
    \dfrac{\partial}{\partial s^A}
  \left(\sqrt{-{\rm det}~G}G^{AB}
  \dfrac{\partial \varphi^\mu}{\partial s^B}\right)=0,
\end{align}
which is just the requirement that $\varphi^\mu$'s are harmonic functions
on the world-sheet.

Choosing a thermal coordinate on the  world-sheet,
this reduces to the wave equation in $\min{1}$,
so that we have ordinary plane-wave solutions for this system.

However, we here consider
the cohomogeneity-1 string, that respects the
spacetime symmetry in $\min{n}$ generated by a Killing vector field.
In this case, the equation of motion reduce to a geodesic equation on
a certain Riemannian, or Lorentzian $n$-manifold
that is the space of the Killing orbits.

In order to find the ignorable coordinate
associated with the Killing vector field $\xi$,
we first solve the coupled ordinary differential equations
\begin{align*}
  \dfrac{dx^\mu(\tau)}{d\tau}=\xi^\mu(x(\tau)).
\end{align*}
The solution may be obtained in the form
\begin{align*}
  x^\mu=x^\mu(\tau,y^1,y^2,\dots,y^n),
\end{align*}
where $(y^a)_{a=1,\dots,n}$ parametrize integral curves of $\xi^\sharp(:=\xi^\mu\partial_\mu)$.
This just gives the coordinate transfomation.
In the $(\tau,y^a)$ coordinate system, the Killing vector field
is simply given by
\begin{align*}
  \xi^\sharp=\del{\tau}.
\end{align*}
Hereafter, we denote by $g$, instead of $\eta$,  
the spacetime metric on $\min{n}$,
when we are concerned with noninertial coordinates. 

The metric in this coordinate system is written as
\begin{align*}
  g=X\left(ds-\sum_{a=1}^nW_ady^a\right)^2+\sum_{a,b=1}^n h_{ab}dy^ady^b,
\end{align*}
where $X$, $W_a$, and $h_{ab}$ depend only on $(y^a)_{a=1,\dots,n}$.
In particular, $X$ is the square of $\xi^\sharp$,
i.e., $X=g(\xi^\sharp,\xi^\sharp)$,
and $h$ is regarded as the natural metric on 
the space of Killing orbits, which aquires the structure of
a differentiable $n$-manifold.
We denote by $\os$ the space of Killing orbits.

The cohomogeneity-1 string is described as
\begin{align*}
s&=\tau,\\
y^a&=\varphi^a(\sigma),~~~(a=1,2,\dots,n)
\end{align*}
in terms of the world-sheet coordinate $(\tau,\sigma)$.
Then, it is known \cite{FSZH89} that the equation of motion (\ref{eq:flat-NGeq}) becomes
the geodesic equation for $\varphi^a(\sigma)$ on the Riemannian, or the Lorentzian $n$-manifold
$(\os,\gamma)$, where the
metric $\gamma$ is given by
\begin{align*}
\gamma_{ab}&=|X|h_{ab}.
\end{align*}

We show the complete integrability of this system
utilizing the conformal trick developed in \cite{KKI10,MHKI17},
which is brilliantly useful for finding pair-wise Poisson commuting
conserved quantities for the geodesic Hamiltonian system on $(\os,\gamma)$.

The Lagrangian for geodesic in $(\os,\gamma)$ is
given by
\begin{align*}
  L(y,\dot y)=-\sqrt{\pm \gamma_{ab}\dot y^a\dot y^b},
\end{align*}
where the double-sign in the square root takes plus-sign if 
$(\os,\gamma)$ is Riemannian
(i.e. if $X<0$), otherwise
minus-sign if it is Lorentzian ($X>0$).
This is a singular Lagrangian system, since it leads to
the constraint
\begin{align*}
\varPhi:=  \gamma^{ab}p_ap_b\mp 1\approx 0.
\end{align*}
Then the Hamiltonian becomes zero, 
and the system is described by the total Hamiltonian
\begin{align*}
  H(y,p,\lambda)=\lambda \varPhi.
\end{align*}
It immediately follows that $\varPhi$ is a first-class constraint,
and the Lagrange multiplier $\lambda$ remains undetermined.
This reflects the invariance of the system
under the reparametrization of the
Hamiltonian time parameter, which in the present case is $\sigma$.

We can fix this ambiguity of time-parametrization simply by putting
\begin{align*}
  \lambda=\dfrac{|X|}{2},
\end{align*}
that corresponds to the time-reparametrization $|X|d\sigma'=\lambda d\sigma$.
After this gauge fixation, the Hamiltonian  becomes
\begin{align}\label{eq:geod-os}
  H(y,p)=\dfrac{1}{2}(h^{ab}p_ap_b+X),
\end{align}
that takes the same form as the particle system with potential function $X/2$.

However, $h^{ab}$ that is the inverse matrix of $h_{ab}$ is just 
the $(a,b)$ component of $g^{\mu\nu}$ that is the inverse matrix of $g^{\mu\nu}$,
so that we can also write as
\begin{align*}
 H(y,p)=\dfrac{1}{2}\left(\sum_{\mu,\nu=1}^n g^{\mu\nu}p_\mu p_\nu+X\right).
\end{align*}
This Hamiltonian system is equivalent with 
the geodesic Hamiltonian
\begin{align*}
 H(s,y,p_s,p)=\dfrac{1}{2}\left(\sum_{\mu,\nu=0}^n g^{\mu\nu}p_\mu p_\nu+X\right),
\end{align*}
with the initial condition $p_s=0$. 
Since $p_s=\xi^\mu p_\mu$ is a conserved quantity, we can simply put $p_s=0$
at the Hamiltonian level.

Further simplification occurs when we perform the point transformation
from $(y^\mu)_{\mu=0,\dots,n}=(s,y^a)$ to the inertial coordinate $(x^\mu)_{\mu=0,\dots,n}$,
that leads to
\begin{align}\label{eq:particle-flat}
  H(x,P)=\dfrac{1}{2}(\eta^{\mu\nu}P_\mu P_\nu+X),
\end{align}
where
the point transformation of the momentum is given by
\begin{align*}
  P_\mu=\dfrac{\partial y^\nu}{\partial x^\mu}p_\nu,
\end{align*}
that is just the transformation law for the cotangent vector.
In particular, the restriction $p_s=0$ is simply written as
\begin{align*}
  \xi^\mu P_\mu=0.
\end{align*}
Thus, the geodesic Hamiltonian system in $(\os,\gamma)$ has been translated
into the particle system in $(\min{n},\eta)$ with the potential
function $X/2$ with the initial condition $\xi^\mu P_\mu=0$.

In order to show the complete integrability of the geodesic Hamiltonian system
defined by (\ref{eq:geod-os}), we have only to find the
$n+1$ pair-wise Poisson commuting conserved quantities for the Hamiltonian
system (\ref{eq:particle-flat}) those are functionally independent.
Among such, the Hamiltonian $H$ and $p_s=\xi^\mu P_\mu$ are already in hand.
Hence our task is to find $n-1$ more pair-wise Poisson 
commuting conserved quantities those
commute with $p_s$.

From Theorem \ref{thm:killing}, a Killing vector in $\min{n}$
can be written as
\begin{align*}
  \xi=F_{\mu\nu}x^\mu dx^\nu+f_\nu dx^\nu,
\end{align*}
where the matrix representation of $F_{\mu\nu}$ 
is a direct sum of $N$, $S$, $O$-blocks.
This globally determines the orthogonal direct sum decomposition
of the space of differential 1-forms on $\min{n}$
\begin{align*}
\varOmega(\min{n})= \varOmega_N\oplus \varOmega_S\oplus \varOmega_O,
\end{align*}
in an obvious manner.
Explicitly, the orthogonal direct sum decomposition of $\xi$,
\begin{align*}
  \xi=\xi_N+\xi_S+\xi_O,
  ~~~\xi_N\in\varOmega_M,
  ~~~\xi_S\in\varOmega_S,
  ~~~\xi_O\in\varOmega_O
\end{align*}
can generally be written as
\begin{align*}
  \xi_N&=(f_0-x^1)dx^0+(x^0-x^2)dx^1+x^1dx^2,\\
  \xi_S&=\sum_{i=1}^r b_i (x^{2i+1}dx^{2i+2}-x^{2i+2}dx^{2i+1}),\\
  \xi_O&=f_ndx^n,
\end{align*}
when $N$ is present, 
\begin{align*}
  \xi_N&=0,\\
  \xi_S&=a(x^0dx^1-x^1dx^0)+\sum_{i=1}^r b_i (x^{2i}dx^{2i+1}-x^{2i+1}dx^{2i}),\\
    \xi_O&=f_ndx^n,
\end{align*}
when $N$ is absent and $\varOmega_S$ is Lorentzian, or
\begin{align*}
  \xi_N&=0,\\
  \xi_S&=\sum_{i=1}^r b_i (x^{2i-1}dx^{2i}-x^{2i}dx^{2i-1}),\\
  \xi_O&=f_0dx^0+f_ndx^n,
\end{align*}
when $N$ is absent and $\varOmega_S$ is Riemannian.
The canonical momentum 1-form $P=P_\mu dx^\mu$ is also decomposed as
\begin{align*}
  P=P_N+P_S+P_O.~~~
  (P_N\in \varOmega_N,~~~
  P_S\in \varOmega_S,~~~
  P_O\in \varOmega_O)
\end{align*}

Then, the Hamiltonian (\ref{eq:particle-flat}) is accordingly decomposed
into three parts as 
\begin{align*}
  H&=H_N+H_S+H_O.
\end{align*}
Here, the $N$-Hamiltonian is defined by
\begin{align*}
  H_N&:=\dfrac{1}{2}\left[\eta^{\mu\nu}(P_N)_\mu (P_N)_\nu+g({}^\sharp\xi_N,{}^\sharp\xi_N)\right]\\
  & =\dfrac{1}{2}\left[-(P_0)^2+(P_1)^2+(P_2)^2
    -(f_0-x^1)^2+(x^0-x^2)^2+(x^1)^2\right],
\end{align*}
if $N$-block is present in the matrix representation of $F_{\mu\nu}$,
and otherwise $H_N=0$.

The $S$-Hamiltonian is similarly defined as
\begin{align*}
  H_S&:=\dfrac{1}{2}\left[\eta^{\mu\nu}(P_S)_\mu (P_S)_\nu
    +g({}^\sharp\xi_S,{}^\sharp\xi_S)\right].
\end{align*}
It is
\begin{align*}
  H_S&=\dfrac{1}{2}\left[-(P_0)^2+(P_1)^2+a^2(x^0)^2-a^2(x^1)^2\right]\\
&  +\dfrac{1}{2}\sum_i\left[(P_{2i})^2+(P_{2i+1})^2
  +(b_i)^2(x^{2i})^2+  (b_i)^2(x^{2i+1})^2\right],
\end{align*}
when $\varOmega_S$ is Lorentzian, and
\begin{align*}
  H_S&=
  \dfrac{1}{2}\sum_i \left[(P_{2i-1})^2+(P_{2i})^2
  +(b_i)^2(x^{2i-1})^2+  (b_i)^2(x^{2i})^2\right],
\end{align*}
when $\varOmega_S$ is Riemannian.

Finally, the $O$-Hamiltonian is defined as
\begin{align*}
  H_O&:=\dfrac{1}{2}\left[\eta^{\mu\nu}(P_O)_\mu (P_O)_\nu
       +g({}^\sharp\xi_O,{}^\sharp\xi_O)\right],
\end{align*}
and it  generally has the form
\begin{align*}
  H_O=\dfrac{1}{2}\left[\sum_\mu \eta^{\mu\mu} (P_\mu)^2-(f_0)^2+(f_n)^2\right].
\end{align*}

These $N$, $S$, $O$-Hamiltonians describe mutually
independ dynamical systems.
With this observation,
the complete integrability of our Hamiltonian system is readily understood.

\begin{thm}
  The Hamiltonian system defined by (\ref{eq:particle-flat})
  is completely integrable.
\end{thm}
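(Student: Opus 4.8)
The plan is to exploit the orthogonal decomposition $H=H_N+H_S+H_O$ that has already been established. Since the three blocks are built from mutually disjoint sets of coordinates and conjugate momenta, any conserved quantity depending only on the variables of one block Poisson-commutes with any quantity depending only on the variables of another block. Consequently it suffices to produce, for each block separately, a family of pairwise Poisson-commuting, functionally independent conserved quantities whose cardinality equals the dimension of that block's configuration space; their union then furnishes the required $n+1$ commuting integrals for $H$, with $H$ itself lying in their span.

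For the $O$-block the Hamiltonian $H_O$ has a constant potential, so every momentum $P_\mu$ with $\mu$ in the $O$-block is conserved, and these momenta trivially commute; this yields exactly $\dim\varOmega_O$ integrals. For the $S$-block I would treat it as a collection of decoupled two-dimensional oscillators. In the Riemannian case each pair $(x^{2i-1},x^{2i})$ carries an isotropic harmonic oscillator, for which the partial energy $E_i=\frac12[(P_{2i-1})^2+(P_{2i})^2+b_i^2((x^{2i-1})^2+(x^{2i})^2)]$ and the angular momentum $L_i=x^{2i-1}P_{2i}-x^{2i}P_{2i-1}$ are conserved and satisfy $\{E_i,L_i\}=0$, giving two integrals per oscillator. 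In the Lorentzian case the extra $(x^0,x^1)$ part is a hyperbolic oscillator, and I would check that the boost momentum $C=x^0P_1+x^1P_0$ (the restriction of $\xi^\mu P_\mu$ to this block) commutes with its partial Hamiltonian; together they supply the two needed integrals.

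The crux is the $N$-block, which is a genuine three-dimensional system and where the third integral is not manifest. After expanding $H_N$ and simplifying its potential, I would pass to the null coordinate $v=\tfrac{1}{\sqrt2}(x^0+x^2)$ with conjugate $P_v=\tfrac{1}{\sqrt2}(P_0+P_2)$, together with $u=\tfrac{1}{\sqrt2}(x^0-x^2)$. The key computation is that $v$ is cyclic: in these variables one finds $H_N=-P_uP_v+\tfrac12(P_1)^2+u^2+f_0x^1-\tfrac12 f_0^2$, which contains no $v$, so $P_v$ is conserved. The three candidate integrals for the block are then $H_N$, the Killing momentum $p_s|_N=\xi_N^\mu P_\mu=(x^1-f_0)P_0+(x^0-x^2)P_1+x^1P_2$, and $P_v$. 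I would verify $\{H_N,P_v\}=0$ from cyclicity, $\{H_N,p_s|_N\}=0$ from the Killing property (equivalently $\mathcal{L}_\xi X=0$), and the one genuinely new bracket $\{P_v,p_s|_N\}=0$, which reduces to noting that translation in $x^0+x^2$ shifts the two contributions of the $(x^0-x^2)P_1$ term oppositely, so that $\{P_0+P_2,\,p_s|_N\}=-P_1+P_1=0$.

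Finally I would assemble the integrals from all blocks and confirm the count $3+2r+\dim\varOmega_O=n+1$ in the $N$-present case, together with the analogous counts $2+2r+\dim\varOmega_O=n+1$ and $2r+\dim\varOmega_O=n+1$ in the two $N$-absent cases, and check their functional independence, which is transparent from the explicit expressions. I expect the main obstacle to be precisely the $N$-block: recognizing the hidden conserved momentum $P_v$ through the null coordinate and verifying that it commutes with the Killing momentum $p_s|_N$, since this is the only integral not handed to us directly by a block Hamiltonian or by the Killing symmetry.
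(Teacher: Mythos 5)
Your proposal is correct and takes essentially the same route as the paper: the same block decomposition $H=H_N+H_S+H_O$, with exactly the same integrals --- your $P_v$ is the paper's $E_N=P_0+P_2$ up to a constant factor, your $p_s|_N$ is the paper's $D_N$, your oscillator energies and angular momenta are $C_S^i$, $D_S^i$ (and $C_S^0$, $D_S^0$ in the Lorentzian case), and the $O$-block momenta are the $F_O^i$. The null-coordinate cyclicity argument and the explicit bracket computations you give merely verify in detail what the paper asserts directly, so there is no substantive difference in approach.
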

\begin{proof}
  We show that $N$, $S$, and $O$-Hamiltonian has as many conserved quantities
  as the size of $N$, $S$, and $O$-block of $F_{\mu\nu}$, respectively.

Firstly, the $N$-Hamiltonian has three conserved quantities
  \begin{align*}
    C_N&:=H_N,\\
    D_N&:=(x^1-f_0)P_0+(x^0-x^2)P_1+x^1P_2,\\
    E_N&:=P_0+P_2.
  \end{align*}

  Secondly, the $S$-Hamiltonian has as many conserved quantities
  as the size of $S$-block, of the following forms
  \begin{align*}
    C_S^0&:=\dfrac{1}{2}\left[-(P_0)^2+(P_1)^2+a^2(x^0)^2-a^2(x^1)^2\right],\\
    D_S^0&:=x^0P_1+x^1P_0,\\
    C_S^i&:=\dfrac{1}{2}\left[(P_i)^2+(P_{i+1})^2+(b_i)^2(x^i)^2+(b_i)^2(x^{i+1})^2\right],\\
        D_S^i&:=x^iP_{i+1}-x^{i+1}P_i.
  \end{align*}

  Finally, the $O$-Hamiltonian has as many conserved quantities as
  the size of $O$-block, which are simply 
  \begin{align*}
    F_O^i:=P_i.
  \end{align*}

  These constitute $n+1$
  functionally independent,
  mutually Poisson commutative conserved quantities.
\end{proof}

In fact, the Hamiltonian system (\ref{eq:particle-flat}) can
generally  be solved in terms of elementary functions.
Then, the solution to the geodesic Hamiltonian system (\ref{eq:geod-os})
is obtained simply the point transformation
of the solution  subject to $\xi^\mu P_\mu=0$, that is just the
coordinate transformation into the stationary coordinate system.

\section{Canonical representation of noncommutative algebra of Killing field}

Here we consider the minimal noncommutative Lie algebra
of Killing vector fields
given by
\begin{align*}
  \left[\xi^\sharp,\eta^\sharp\right]=\xi^\sharp.
\end{align*}
We show that this algebra is realized only by a limited class of
Killing vector pairs.
The result shown in the following will be useful e.g. in analysis
of Nambu-Goto membranes those are invariant under the action
of a 2-dimensional isometry group.

Let $\xi$ and $\eta$ be written as
\begin{align*}
  \xi&=F_{\mu\nu}x^\mu dx^\nu+f_\nu dx^\nu,\\
  \eta&=G_{\mu\nu}x^\mu dx^\nu+g_\nu dx^\nu,
\end{align*}
respectively.
Then, the condition $[\xi^\sharp,\eta^\sharp]=\xi^\sharp$ is equivalent
  to
\begin{align*}
  \left[\F,\G\right]&:=\F\G-\G\F=\F,\\
  (\G+I)f^\sharp&=\F g^\sharp.
\end{align*}

Let us introduce a several notions for a special
type of 2-forms, which turns out to be useful.
\begin{dfn}[superdiagonal array]
  For the following form of the constant 2-form
  \begin{align*}
    F=u_0dx^0dx^1+u_1dx^1dx^2+\dots+u_{n-1}dx^{n-1}dx^n,
  \end{align*}
  the array
  \begin{align*}
    (u_0,u_1,\dots,u_{n-1})
  \end{align*}
  is called the superdiagonal array of $F$.
\end{dfn}

\begin{dfn}[isolated nonzero element]
 A nonzero element of the superdiagonal array of $F$
 is called an isolated nonzero element,
 if it is not next to a nonzero element.
  \end{dfn}

With these terminology, we can state the following Lemma.
\begin{lem}\label{lem:FG}
  Let a pair of constant 2-forms $F$, $G$ be subject to
  the commutation relation
  \begin{align*}
    \left[\F,\G\right]=\F,
  \end{align*}
  and let $F$ be written by a superdiagonal array.
  Then, the superdiagonal array of $F$ does not have an isolated
  nonzero element.
\end{lem}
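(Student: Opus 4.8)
The plan is to extract from the relation $[\F,\G]=\F$ the one algebraic fact that is really needed, namely that $\F$ is nilpotent, and then to read nilpotency against the tridiagonal matrix that the superdiagonal array produces.

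First I would prove that $\F$ is nilpotent. For every $m\ge 1$, writing $\F=\F\G-\G\F$ and using the cyclic invariance of the trace gives
\begin{align*}
\operatorname{tr}(\F^m)=\operatorname{tr}\!\left(\F^{m-1}(\F\G-\G\F)\right)=\operatorname{tr}(\F^m\G)-\operatorname{tr}(\F^m\G)=0.
\end{align*}
Since $\F$ acts on the $(n+1)$-dimensional space $\minc{n}$ and all of its power sums vanish, Newton's identities force the characteristic polynomial of $\F$ to be $\lambda^{n+1}$; hence every eigenvalue of $\F$ is $0$, i.e. $\F$ is nilpotent. (This is consistent with Lemma \ref{lem:eigenvalue}, which already restricts the eigenvalues to be real or pure imaginary; the trace identity then annihilates them all.)

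Next I would make the superdiagonal array explicit at the level of the operator $\F^\mu{}_\nu=\eta^{\mu\lambda}F_{\lambda\nu}$. Writing $F=\sum_k u_k\,dx^kdx^{k+1}$ and raising the first index with $\eta=\operatorname{diag}(-1,1,\dots,1)$ turns $\F$ into a tridiagonal matrix with vanishing diagonal: its $(\mu,\mu+1)$ entry is $-u_0$ for $\mu=0$ and $u_\mu$ for $\mu\ge1$, while its $(\mu,\mu-1)$ entry is $-u_{\mu-1}$. The key structural observation is then purely combinatorial: if $u_j$ is an isolated nonzero element, so that $u_{j-1}=0$ (or $j=0$) and $u_{j+1}=0$ (or $j=n-1$), then both couplings that would connect the index pair $\{j,j+1\}$ to $j-1$ and to $j+2$ vanish, so the coordinate plane spanned by $\partial_j,\partial_{j+1}$ is $\F$-invariant.

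Finally I would derive the contradiction. On that invariant plane $\F$ restricts to the $2\times2$ block
\begin{align*}
\begin{pmatrix}0 & u_j\\ -u_j & 0\end{pmatrix}\ (j\ge1),\qquad
\begin{pmatrix}0 & -u_0\\ -u_0 & 0\end{pmatrix}\ (j=0),
\end{align*}
whose eigenvalues are $\pm i u_j$ (respectively $\pm u_0$), and these are nonzero because $u_j\ne0$. As eigenvalues of a restriction to an invariant subspace they are eigenvalues of $\F$, contradicting nilpotency; hence no isolated nonzero element can occur. The only bookkeeping requiring care — and the single place where the signature enters — is the asymmetry from $\eta^{00}=-1$, which is exactly what makes the boundary block at $j=0$ a boost (real eigenvalues) rather than a rotation (imaginary eigenvalues). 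I expect this sign tracking, together with checking the decoupling separately at each boundary convention $j=0$ and $j=n-1$, to be the main (though routine) obstacle; uniform nilpotency is what lets the timelike boundary, the spacelike interior, and the far end be treated on the same footing.
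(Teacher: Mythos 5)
Your proof is correct, but it takes a genuinely different route from the paper's. The paper argues purely locally: assuming $u_j$ is an isolated nonzero element, it computes the single component $\left[\F,\G\right]^j{}_{j+1}$ of the commutator and observes that every term in it is proportional to $u_{j-1}$ or $u_{j+1}$ (times entries of $\G$), hence vanishes, while the required value $(\F)^j{}_{j+1}=\pm u_j$ is nonzero --- a contradiction that needs no spectral theory and no information about $G$ beyond antisymmetry. You instead extract nilpotency of $\F$ from the trace identity $\operatorname{tr}(\F^m)=\operatorname{tr}\bigl(\F^{m-1}[\F,\G]\bigr)=0$ together with Newton's identities, and then exhibit the $\F$-invariant plane spanned by $\partial_j,\partial_{j+1}$, on which $\F$ restricts to a rotation block with eigenvalues $\pm iu_j$ or, at the timelike boundary $j=0$, a boost block with eigenvalues $\pm u_0$, contradicting nilpotency; your sign bookkeeping with $\eta^{00}=-1$ and both boundary conventions is accurate. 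What your approach buys: it proves strictly more than the lemma, since nilpotency of $\F$ is exactly the content of Proposition \ref{prop:rankF}, which the paper obtains only downstream by combining Lemma \ref{lem:FG} with the classification Theorem \ref{thm:F}; your argument makes that conclusion independent of the classification. What the paper's approach buys: it is completely elementary (a three-case component computation), stays inside the superdiagonal formalism in which the lemma is phrased, and pinpoints exactly which commutator entries obstruct isolation. Both proofs share the same three-way division into the cases $j=0$, interior $j$, and $j=n-1$.
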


\begin{proof}
  Let $F$ be written in terms of a superdiagonal array
  $(u_0,u_1,\dots,u_{n-1})$.

Assume that $u_0$ is an isolated nonzero element, i.e.,
  that $u_0\ne 0$ and $u_1=0$ hold.
  Then, $(0,1)$ component of $\left[\F,\G\right]$ becomes
  \begin{align*}
    \left[\F,\G\right]^0{}_1=-G^0{}_2F^2{}_1=G^0{}_2 u_1=0,
  \end{align*}
  while $(\F)^0{}_1=-u_0\ne 0$.
  Hence, $\left[\F,\G\right]=\F$ cannot hold.

Next, assume that $u_i$ is an isolated nonzero element
  for some $i$ $(2\le i\le n-2)$, i.e., that
  $u_i\ne 0$ and $u_{i\pm 1}=0$ hold.
  Then, it holds
  \begin{align*}
    \left[\F,\G\right]^i{}_{i+1}=F^i{}_{i-1}G^{i-1}{}_{i+1}
    -G^i{}_{i+2} F^{i+2}{}_{i+1}
    =-u_{i-1}G^{i-1}{}_{i+1}+u_{i+1}G^i{}_{i+2}=0,
  \end{align*}
  while $(\F)^i{}_{i+1}=u_{i}\ne 0$ holds.
  Hence, $\left[\F,\G\right]=\F$ does not hold again.

  Finally, assume that $u_{n-1}$ is an isolated nonzero element,
  i.e., $u_{n-2}=0$ and $u_{n-1}\ne 0$ hold.
  Then, it holds
  \begin{align*}
    \left[\F,\G\right]^{n-1}{}_n
    =F^{n-1}{}_{n-2}G^{n-2}{}_n=-u_{n-2}G^{n-2}{}_n=0,
  \end{align*}
  while $(\F)^{n-1}{}_n=u_{n-1}\ne 0$.
  Hence, $\left[\F,\G\right]=\F$ does not hold.

  These show that the superdiagonal array of $F$ does not contain
  an isolated nonzero element, otherwize
  $\left[\F,\G\right]=\F$ is not fulfilled.
\end{proof}

The Theorem \ref{thm:F}
combined with this Lemma leads to the following Proposition.

\begin{prop}\label{prop:rankF}
  Let $\xi$ and $\eta$ be Killing fields subject to
  $\left[\xi^\sharp,\eta^\sharp\right]=\xi^\sharp$, and let $\xi$ be
  written as
  \begin{align*}
    \xi=F_{\mu\nu}x^\mu dx^\nu+f_\nu dx^\nu.
  \end{align*}
  Then, $\F$ is a nilpotent matrix.
\end{prop}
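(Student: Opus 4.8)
The plan is to combine the algebraic constraint with the classification already established and reduce everything to an inspection of the finitely many canonical forms. As recorded just above the statement, the relation $[\xi^\sharp,\eta^\sharp]=\xi^\sharp$ is equivalent to the two conditions $[\F,\G]=\F$ and $(\G+I)f^\sharp=\F g^\sharp$; only the first will be needed. The essential observation is that the operator identity $[\F,\G]=\F$ is invariant under conjugation by the Lorentz group: if $x'=\varLambda x$, then $\F$ and $\G$ are replaced by $\varLambda\,\F\,\varLambda^{-1}$ and $\varLambda\,\G\,\varLambda^{-1}$, and $[\varLambda\F\varLambda^{-1},\varLambda\G\varLambda^{-1}]=\varLambda[\F,\G]\varLambda^{-1}=\varLambda\F\varLambda^{-1}$. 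Hence I may first apply the Lorentz transformation that brings $F$ into one of the canonical superdiagonal forms of Theorem \ref{thm:F}, at the cost of replacing $G$ by its (still existing, if unknown) conjugate; the relation $[\F,\G]=\F$ continues to hold for this new pair.

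Next I would invoke Lemma \ref{lem:FG}. Once $F$ is in superdiagonal form and $[\F,\G]=\F$ holds for some $G$, the superdiagonal array of $F$ can contain no isolated nonzero element, and I would simply read off which of the six canonical arrays (a)--(f) of Theorem \ref{thm:F} meet this requirement. In arrays (b), (c), (d), and (f) at least one parameter ($a$ or some $b_i$) occupies a superdiagonal slot both of whose neighbours vanish, so it is an isolated nonzero element and these four cases are excluded. Exactly two arrays survive: (a), for which $F=0$, and (e), with $u=(1,1,0,\dots,0)$, i.e. $F=dx^0dx^1+dx^1dx^2$, where the two consecutive entries $1,1$ are mutually adjacent and hence not isolated.

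Finally I would verify nilpotency in each surviving case. For (a) this is immediate, since $\F=0$. For (e) the operator $\F$ is precisely the $N$-block of Section 4 acting on the $(x^0,x^1,x^2)$-space and vanishing on the orthogonal complement; lowering and raising the index with $\eta=\operatorname{diag}(-1,1,1)$ yields an explicit $3\times3$ matrix, and a direct computation gives $\F^3=0$. In both admissible cases $\F$ is therefore nilpotent, which is the assertion. I do not expect a genuine obstacle: the only conceptual point is the Lorentz-invariance of the commutator relation, which legitimizes the passage to canonical form, after which the argument is the bookkeeping of isolated superdiagonal entries through Lemma \ref{lem:FG} together with one explicit nilpotency check for the $N$-block.
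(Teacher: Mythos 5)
Your proposal is correct and takes essentially the same route as the paper's own proof: bring $F$ to one of the canonical superdiagonal forms of Theorem \ref{thm:F}, invoke Lemma \ref{lem:FG} to exclude every case containing an isolated nonzero superdiagonal entry, and conclude that only cases (a) and (e) survive, in both of which $\F$ is nilpotent. Your two added details---the explicit check that $[\F,\G]=\F$ is preserved under simultaneous Lorentz conjugation, and the direct computation $\F^3=0$ for the $N$-block---are points the paper leaves implicit, not a different argument.
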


\begin{proof}
  By a Lorentz transformation, $F$ can be brought into one of standard forms
  shown in Theorem \ref{thm:F}, which are all written by  superdiagonal arrays.
  By Lemma \ref{lem:FG}, the superdiagonal array of
  $F$ does not have an isolated nonzero element.
  Hence, the only possibility is
  (a) $F=0$, or (e) $F=dx^0dx^1+dx^1dx^2$.
  In both cases, the matrix $\F$ is nilpotent.
\end{proof}

This Proposition
greatly restricts the form of the Killing vectors $\xi$ and $\eta$.
The next Lemma determines the form of $d\eta$ in the case that $d\xi\ne 0$.
\begin{lem}\label{lem:FGF}
  A constant 2-form $G$ that satisfies
  \begin{align*}
    \left[\F,\G\right]&=\F
  \end{align*}
  for
  \begin{align*}
    F=dx^0dx^1+dx^1dx^2
  \end{align*}
  can be brought into the form
  \begin{align*}
    G=dx^0dx^2,
  \end{align*}
  or
  \begin{align*}
    G&=dx^0dx^2+\sum_{i=1}^rb_{2i+1}dx^{2i+1}dx^{2i+2}
    & (b_3\ge b_5\ge\dots\ge b_{2r+1}>0)
  \end{align*}
  by a Lorentz transformation that preserves the form of $F$.
\end{lem}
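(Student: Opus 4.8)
The plan is to solve the matrix equation $[\F,\G]=\F$ directly for the given nilpotent $F$, exploiting the constraint that the solution $G$ must itself be a constant $2$-form (i.e. $\G$ is $\eta$-alternating), and then use the residual gauge freedom—the subgroup $G_F$ of the Lorentz group preserving $F$—to normalize $G$. First I would write $\F$ explicitly in the $(x^0,x^1,x^2)$-block together with its action on the remaining coordinates; since $\operatorname{ker}\F$ and $\operatorname{Im}\F$ are fixed by $F$, and the commutation relation forces $\G$ to map these subspaces compatibly, I expect $G$ to decompose into a piece living in the $(x^0,x^1,x^2)$ space (contributing the $dx^0dx^2$ term) plus a piece $\sum b_{2i+1}dx^{2i+1}dx^{2i+2}$ supported on the orthogonal complement.

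Concretely, I would impose $[\F,\G]=\F$ componentwise. The relation $\G\F-\F\G=\F$ says that $\G$ acts as a lowering operator that shifts $\F$ by itself; equivalently $\F$ is an eigenvector (eigenvalue $-1$) of the derivation $X\mapsto[\G,X]$. On the three-dimensional null block where $F=dx^0dx^1+dx^1dx^2$, a short computation of the entries of $[\F,\G]$ (entirely analogous to the component calculations already performed in the proof of Lemma~\ref{lem:FG}) pins down the relevant entries of $G$ in that block, forcing the $dx^0dx^2$ term and fixing its coefficient. The alternating condition on $\G$ together with $[\F,\G]=\F$ then shows that the off-block entries of $G$ coupling the null directions $(x^0,x^1,x^2)$ to the transverse directions are constrained to vanish or to be removable, so that the transverse part of $G$ is an independent constant $2$-form $F_2$ commuting with the transverse (trivial) part of $F$.

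Next I would invoke the residual gauge. The subgroup of $O_{n,1}$ preserving $F=dx^0dx^1+dx^1dx^2$ acts on the transverse $2$-form $F_2$ exactly as the orthogonal group acts on a constant $2$-form in the complementary Euclidean space; by the spacelike case of Theorem~\ref{thm:F} (Proposition~\ref{prop:timelike}) this brings $F_2$ into the block-diagonal form $\sum_{i=1}^r b_{2i+1}dx^{2i+1}dx^{2i+2}$ with $b_3\ge b_5\ge\dots\ge b_{2r+1}>0$. Combined with the normalized $dx^0dx^2$ term from the null block, this yields exactly the two stated forms, according to whether the transverse part vanishes.

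The main obstacle I anticipate is bookkeeping in the null block: because $F$ here is genuinely non-semisimple (nilpotent with a nontrivial Jordan structure on the null $3$-space), I cannot diagonalize and must argue carefully that the gauge transformations preserving $F$ are rich enough to kill every unwanted cross-term in $G$ while not disturbing the already-normalized $dx^0dx^2$ coefficient. In particular I will need to check that the stabilizer $G_F$ acts on the transverse block by the full orthogonal group (so that the ordering $b_3\ge\dots\ge b_{2r+1}$ can be achieved) and simultaneously supplies the boosts/null rotations needed to eliminate the mixed entries; verifying the compatibility of these two normalizations is the delicate point, whereas the raw solution of $[\F,\G]=\F$ is routine.
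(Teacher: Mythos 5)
Your proposal is correct and follows essentially the same route as the paper: solve the linear equation $[\F,\G]=\F$ componentwise to get the general form $G=dx^0dx^2+h_{01}(dx^0-dx^2)dx^1+\sum_k h_{0k}(dx^0-dx^2)dx^k+\sum h_{ij}dx^idx^j$ (with the $dx^0dx^2$ coefficient pinned to $1$), then use the stabilizer of $F$ --- null rotations to remove the cross terms (possible since $\delta_{ij}-h_{ij}$ is invertible for antisymmetric $h$) followed by a transverse orthogonal rotation to block-diagonalize $h_{ij}$. The compatibility issue you flag as delicate resolves immediately in this ordering, because the null rotations leave the transverse block untouched and the transverse rotations leave the null block untouched.
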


\begin{proof}
The general form of  a constant 2-form $G$
subject to  $\left[\F,\G\right]=\F$ is given by
\begin{align*}
  G&=dx^0dx^2+h_{01}(dx^0-dx^2)dx^1\\
  &+\sum_{k=3}^n h_{0k} (dx^0-dx^2)dx^k
+\sum_{3\le i<j\le n}h_{ij}dx^idx^j.
\end{align*}
By a Lorentz transformation
\begin{align*}
  x^0&=\left(1+\dfrac{\alpha^2}{2}\right)x'^0-\alpha x'^1-\dfrac{\alpha^2}{2}x'^2,\\
  x^1&=x'^1-\alpha(x'^0-x'^2),\\
  x^2&=\dfrac{\alpha^2}{2}x'^0-\alpha x'^1+\left(1-\dfrac{\alpha^2}{2}\right)x'^2,\\
  x^k&=x'^k,&(k=3,4,\dots,n)
\end{align*}
that leaves $F$ invariant,
the 2-form $G$ transforms to
\begin{align*}
  G&=dx'^0dx'^2+(h_{01}-\alpha)(dx'^0-dx'^2)dx'^1\\
  &+\sum_{k=3}^n h_{0k} (dx'^0-dx'^2)dx'^k
  +\sum_{3\le i<j\le n}h_{ij}dx'^idx'^j.
\end{align*}
By taking $\alpha=h_{01}$,
it becomes
\begin{align*}
  G=dx'^0dx'^2+\sum_{k=3}^n h_{0k} (dx'^0-dx'^2)dx'^k
  +\sum_{3\le i<j\le n}h_{ij}dx'^idx'^j.
\end{align*}

By a further Lorentz transformation
\begin{align*}
  x^0&=\left(1+\dfrac{\beta^2}{2}\right)x'^0-\dfrac{\beta^2}{2}x'^2-
  \sum_{k=3}^n\beta_kx'^k,\\
  x^1&=x'^1,\\
  x^2&=\dfrac{\beta^2}{2}x'^0+\left(1-\dfrac{\beta^2}{2}\right)x'^2
  -\sum_{k=3}^n \beta_kx'^k,\\
  x^k&=x'^k-\beta_k(x'^0-x'^2),&(k=3,4,\dots,n)
\end{align*}
that preserves $F$,
the 2-form $G$ becomes
\begin{align*}
  G&=dx''^0dx''^2+\sum_{3\le i<j\le n}h_{ij}dx''^idx''^j\\
  &-\sum_{i=3}^n[(\delta_{ij}-h_{ij})\beta_j-h_{0i}](dx''^0-dx''^2)dx''^k,
\end{align*}
where $h_{ij}$ is regarded as constituting an alternating matrix.

Since $(\delta_{ij}-h_{ij})$'s constitute a invertible matrix,
the last term can be made zero
by appropriately choosing $\beta_j$.

When $h_{ij}=0$, $G$ is written as $G=dx''^0dx''^2$.
Otherwise, by an $O_{n-1}$ transformation
that preserves $(x''^3)^2+\dots+(x''^n)^2$,
the second term can be brought into the form
\begin{align*}
  h_{ij}dx''^idx''^j&=\sum_{i=1}^rb_{2i+1}dx'''^{2i+1}dx'''^{2i+2},
 &(b_3\ge b_5\ge\dots\ge b_{2r+1}>0)
\end{align*}
where  $2r$ is the matrix rank of $h_{ij}$.
\end{proof}

The following Lemma gives a general form of the noncommutative Killing vector pair,
when $F=dx^0dx^1+dx^1dx^2$.
\begin{lem}\label{lem:FN}
  Let $\xi$ and $\eta$ be Killing vector fields
  subject to $\left[\xi^\sharp,\eta^\sharp\right]=\xi^\sharp$.
  which are written as
  \begin{align*}
    \xi_\nu&=F_{\mu\nu}x^\mu dx^\nu+f_\nu dx^\nu,\\
    \eta_\nu&=G_{\mu\nu}x^\mu dx^\nu+g_\nu dx^\nu,
  \end{align*}
in terms of
\begin{align*}
  F&=dx^0dx^1+dx^1dx^2,
\end{align*}
and
\begin{align*}
  G&=dx^0dx^2,
\end{align*}
or
\begin{align*}
     G&=dx^0dx^2+\sum_{i=1}^rb_{2i+1}dx^{2i+1}dx^{2i+2},
\end{align*}
where $b_{2i+1}\ne 0$ $(i=1,\dots,r)$.
Then, by a Poincar\'e transformation
that simaltaneously leaves $F$ and $G$ invariant,
we can set
  \begin{align*}
f&=    f_\nu dx^\nu=0,\\
g&=    g_\nu dx^\nu=qdx^n.
  \end{align*}
\end{lem}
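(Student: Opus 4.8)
The plan is to exploit the two constraint equations already derived for a noncommutative pair,
\begin{align*}
  [\F,\G]=\F,\qquad (\G+I)f^\sharp=\F g^\sharp,
\end{align*}
together with the freedom to act by Poincar\'e transformations that fix both $F$ and $G$. Since the $F$ and $G$ here are precisely the canonical forms produced by Lemma \ref{lem:FGF}, the residual gauge group $G_F\cap G_G$ is still reasonably large: it contains the translations generated by $\operatorname{Im}F$ (which shift $f$ by $Fc$), and whatever rotations and boosts leave both 2-forms invariant. First I would recall from Section 4 that under a translation by $c$ the 1-form $f$ transforms as $f\mapsto f+Fc$, so any component of $f$ lying in $\operatorname{Im}F$ can be removed; the same translation simultaneously shifts $g$ by $Gc$, so I must track its effect on $g$ as well.

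\textbf{Eliminating $f$.} The key observation is that the relation $(\G+I)f^\sharp=\F g^\sharp$ is not symmetric in $f$ and $g$: it lets me solve for $f^\sharp$ once $g^\sharp$ is fixed, provided $\G+I$ is invertible on the relevant subspace. For $F=dx^0dx^1+dx^1dx^2$ the operator $\F$ is nilpotent (Proposition \ref{prop:rankF}), so its image is spanned by $dx^0-dx^2$ and $dx^1$; meanwhile $\G+I$ has the eigenvalue $-1$ blocked off by the $S$-block structure of $G$, so I would split $\min{n}$ into the three-dimensional $(x^0,x^1,x^2)$-block where $\F,\G$ act nontrivially, the even-dimensional $S$-part carrying the $b_{2i+1}$ rotations, and the remaining flat directions. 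On each block the constraint can be solved explicitly: the $S$-part forces the corresponding components of $f^\sharp$ to vanish because $\G+I$ is invertible there while $\F$ annihilates those directions, and on the $(x^0,x^1,x^2)$-block the residual components of $f$ are exactly those one can gauge away by a translation along $\operatorname{Im}F$. The outcome is $f=0$.

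\textbf{Reducing $g$.} With $f=0$ the second constraint collapses to $\F g^\sharp=0$, i.e. $g^\sharp\in\ker\F$. I would then use the remaining gauge freedom — translations by $c\in\ker F$ (which leave $f=0$ untouched since $Fc=0$) together with the rotation/boost stabilizer of both $F$ and $G$ — to normalize $g$. The vector $g^\sharp$ lying in $\ker\F$ means $g$ has no component conjugate to the nilpotent block, and the $S$-block rotations generated by the $b_{2i+1}$'s let me rotate away the components of $g$ in the $(x^3,\dots,x^{2r+2})$ directions, exactly as in the proof of Lemma \ref{lem:FGF}. What survives is a single component in the flat $O$-directions, which an $\so{}$ rotation aligns with $dx^n$, giving $g=q\,dx^n$.

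\textbf{Main obstacle.} The delicate point is verifying that the translations used to kill $f$ do not reintroduce structure into $g$ that I cannot subsequently remove, and vice versa — the two 1-forms are coupled both through the shared constraint $(\G+I)f^\sharp=\F g^\sharp$ and through the common translation parameter $c$. I expect the crux to be a careful bookkeeping of the simultaneous action on $(f,g)$: I must choose the translation $c$ so that $Fc$ cancels the unwanted part of $f$ while checking that the induced shift $Gc$ of $g$ stays inside the span I can later rotate away (namely $\ker\F$ plus $S$-directions). Organizing this as a block-by-block computation, rather than treating $f$ and $g$ in one stroke, should make the coupling transparent and confirm that the normal form $f=0$, $g=q\,dx^n$ is attainable.
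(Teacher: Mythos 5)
Your overall plan is viable and, in fact, is essentially a reordering of the paper's proof (the paper normalizes $g$ first by an explicit translation and a rotation, then reads off $f$ from the constraint, then removes the last piece of $f$ by one more translation; you kill $f$ first and normalize $g$ afterwards). Your elimination of $f$ is correct: on the $S$-block and on the flat directions $\G+I$ is invertible while $\F g^\sharp$ has no components there, so those components of $f^\sharp$ vanish; on the $(x^0,x^1,x^2)$-block the constraint forces $f^0=f^2$ and $f^1=g^2-g^0$ (with the solvability condition $g^1=0$), so $f=f^0(-dx^0+dx^2)+(g^2-g^0)\,dx^1\in\operatorname{Im}F$, which a translation removes while preserving $F$ and $G$.

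The genuine gap is in the step ``Reducing $g$.'' First, rotations cannot ``rotate away'' the components of $g$ in the $(x^3,\dots,x^{2r+2})$ directions: every Lorentz transformation preserving $G$ acts orthogonally on the $S$-block, hence preserves $\sum_i\left[(g_{2i+1})^2+(g_{2i+2})^2\right]$; if this quantity is nonzero, no rotation will ever annihilate it --- rotations can only align components, not remove them. Second, $g^\sharp\in\operatorname{ker}\F$ does not mean that $g$ has no component in the nilpotent block: inside that block $\operatorname{ker}\F$ is spanned by $\partial_0+\partial_2$, so a surviving term proportional to $-dx^0+dx^2$ is allowed by the constraint, and your outline never removes it (the Lorentz part of the joint stabilizer of $F$ and $G$ acts essentially trivially on this block, so it cannot help either). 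Both leftovers must be killed by translations --- a tool you list ($c\in\operatorname{ker}F$, which indeed preserves $f=0$) but never deploy for this purpose: for $c$ along the $S$-directions and along $\partial_0+\partial_2$, the shift $g\mapsto g-G_{\mu\nu}c^\mu dx^\nu$ sweeps out the entire $S$-block (this is where the invertibility of $G$ on the $S$-block enters, through coefficients $1/b_{2i+1}$) together with all multiples of $-dx^0+dx^2$. This is precisely what the paper's explicit choice $c=\left(g_2,0,-g_0,\frac{g_4}{b_3},-\frac{g_3}{b_3},\dots,\frac{g_{2r+2}}{b_{2r+1}},-\frac{g_{2r+1}}{b_{2r+1}},0,\dots,0\right)$ accomplishes. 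With that repair, your final step (an $O_{n-2r-2}$ rotation aligning the remaining flat components with $dx^n$) closes the proof.
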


\begin{proof}
  Recall that $g$ transforms as
  \begin{align*}
    g'_\nu=g_\nu-G_{\mu\nu}c^\mu,
  \end{align*}
  according to the translation
  \begin{align*}
    x'^\mu=x^\mu+c^\mu.
  \end{align*}
  Hence, by taking
  \begin{align*}
   & (c^0,c^1,c^2,c^3,c^4,\dots,c^{2r+1},c^{2r+2},c^{2r+3},\dots,c^n)\\
    =&\left(g_2,0,-g_0,\dfrac{g_4}{b_3},-\dfrac{g_3}{b_3},\dots,
    \dfrac{g_{2r+2}}{b_{2r+1}},-\dfrac{g_{2r+1}}{b_{2r+1}},0,\dots,0\right),
  \end{align*}
$g$ becomes
  \begin{align*}
g'= g'_\nu dx'^\nu=g_1dx'^1+\sum_{k=2r+3}^n g_k dx'^k.
  \end{align*}
  Here and in what follows,
  the case $G=dx^0dx^2$ is regarded as the $r=0$ case.

  Next, by $O_{n-2r-2}$ transformation
  preserving $(x'^{2r+3})^2+\dots+(x'^n)^2$,
  we can set
  \begin{align*}
    g=g_1dx^1+q dx^n,\
  \end{align*}
  where
      $q=\sqrt{(g_{2r+3})^2+\dots+(g_n)^2}$.

  The condition $\left[\xi^\sharp,\eta^\sharp\right]=\xi^\sharp$ leads to
  \begin{align*}
    (\G+I)f^\sharp=\F g^\sharp.
  \end{align*}
Since its covariant components are
  \begin{align*}
    (G_{\mu\nu}+\eta_{\mu\nu})f^\nu dx^\mu
   & =(-f^0+f^2)(dx^0+dx^2)+f^1dx^1\\
    &    +\sum_{i=1}^r
    \left[(f^{2i+1}+b_{2i+1}f^{2i+2})dx^{2i+1}+(f^{2i+2}-b_{2i+1}f^{2i+1})dx^{2i+2}
    \right]\\
    &+\sum_{k=2r+3}^n f^kdx^k,\\
    F_{\mu\nu}g^\nu dx^\mu
    &=g^1(dx^0-dx^2),
  \end{align*}
they should hold
  \begin{align*}
    f^2&=f^0,\\
    f^\mu&=0,~~~(\mu=1,3,4,\dots,n)\\
    g^1&=0.
  \end{align*}

  Moreover, by the translation
  \begin{align*}
    x'^\mu=x^\mu+f^0\delta^\mu_0,
  \end{align*}
we can set
  \begin{align*}
    f^0=0.
  \end{align*}
  
Hence the final forms of $f$ and $g$ are
  \begin{align*}
f&=0,\\
g&=qdx^n.
  \end{align*}
\end{proof}

The other possibility $F=0$ is covered by
the following Lemmas.

\begin{lem}\label{lem:F0-1}
  Let $\xi=f_\mu dx^\mu(\ne 0)$ be a translation Killing field.
  If $\xi$ is a timelike or spacelike  1-form,
  there is no Killing field $\eta$ that satisfies
  $\left[\xi^\sharp,\eta^\sharp\right]=\xi^\sharp$.
\end{lem}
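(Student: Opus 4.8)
The plan is to reduce the statement to the two algebraic conditions for $[\xi^\sharp,\eta^\sharp]=\xi^\sharp$ recorded at the beginning of this section, and then invoke the spectral results of Section 3. Writing the second field as $\eta=G_{\mu\nu}x^\mu dx^\nu+g_\nu dx^\nu$, those conditions read
\begin{align*}
  [\F,\G]=\F,\qquad (\G+I)f^\sharp=\F g^\sharp.
\end{align*}
Because $\xi$ is a pure translation, its associated constant 2-form vanishes, so $\F=0$. The first condition becomes $[\,0\,,\G]=0$, which is automatically satisfied and therefore imposes no restriction on $G$; the second condition collapses to $(\G+I)f^\sharp=0$, that is, $\G f^\sharp=-f^\sharp$.

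Thus any putative $\eta$ forces $f^\sharp$ to be an eigenvector of $\G$ with eigenvalue $-1$. The decisive observation is that $-1$ is a \emph{nonzero} eigenvalue, which is exactly the hypothesis required to apply Corollary \ref{cor:null} (valid for any constant 2-form, in particular for $G$): every eigenvector belonging to a nonzero eigenvalue of $\G$ is a null vector. Note that $f^\sharp$ is a genuine real vector, $\G$ is real, and the eigenvalue $-1$ is real, so this is a real eigenvalue equation and Corollary \ref{cor:null} forces $f^\sharp$ to be a real null vector.

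This contradicts the hypothesis. By assumption $f$ (equivalently $f^\sharp$) is timelike or spacelike, so $\eta^{\mu\nu}f_\mu f_\nu\ne 0$ and $f^\sharp$ is not null. Hence no constant 2-form $G$ can satisfy $\G f^\sharp=-f^\sharp$, and consequently no Killing field $\eta$ solves $[\xi^\sharp,\eta^\sharp]=\xi^\sharp$. The argument is essentially immediate once the reduction $\F=0$ is carried out; the only step requiring any care is recognizing that the eigenvalue produced is nonzero, so that Corollary \ref{cor:null} is applicable. This is the crux of the proof, and it is precisely the place where the timelike-or-spacelike hypothesis—as opposed to the null case treated separately—becomes essential.
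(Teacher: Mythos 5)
Your proof is correct and follows essentially the same route as the paper: reduce to $\F=0$, obtain an eigenvalue equation for $f^\sharp$ under $\G$, and invoke Corollary \ref{cor:null} to force $f^\sharp$ to be null, contradicting the timelike/spacelike hypothesis. The only discrepancy is the sign of the eigenvalue --- the paper writes $\G f^\sharp=f^\sharp$ whereas your $\G f^\sharp=-f^\sharp$ is what actually follows from the displayed condition $(\G+I)f^\sharp=\F g^\sharp$ --- but this is immaterial since both eigenvalues are nonzero and real, which is all the argument needs.
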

\begin{proof}
  The equation  $\left[\xi^\sharp,\eta^\sharp\right]=\xi^\sharp$
  is equivalent to
  \begin{align*}
    \G f^\sharp=f^\sharp.
  \end{align*}
  By Corollary \ref{cor:null}, $f^\sharp$ should be a null vector.
\end{proof}

\begin{lem}\label{lem:F0-2}
  Let   $\xi=-dx^0+dx^1$ be a null translation Killing field.
  Then, every Killing field $\eta$
  that satisfies $\left[\xi^\sharp,\eta^\sharp\right]=\xi^\sharp$
  can be brought into the form
\begin{align*}
\eta=x^0dx^1-x^1dx^0+qdx^n,
\end{align*}
or
\begin{align*}
\eta&=x^0dx^1-x^1dx^0+\sum_{i=1}^{r}b_{2i}
(x^{2i}dx^{2i+1}-x^{2i+1}dx^{2i})+qdx^n,
&(b_2\ge b_4\ge\dots\ge b_{2r}>0)
\end{align*}
by a Poincar\'e transformation  that leaves $\xi$ invariant.
\end{lem}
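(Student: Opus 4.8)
The plan is to treat this as the $\F=0$ analogue of Lemma \ref{lem:FN}, carrying out the normal-form reductions of $G$ and of $g$ entirely inside the subgroup of $\poin{n}$ that leaves $\xi$ invariant (call it $G_\xi$). Write $\eta=G_{\mu\nu}x^\mu dx^\nu+g_\nu dx^\nu$. Since $\xi=-dx^0+dx^1$ is a pure translation we have $\F=0$, so the commutation relation $[\xi^\sharp,\eta^\sharp]=\xi^\sharp$ collapses (as in the proof of Lemma \ref{lem:F0-1}) to the single algebraic condition that $f^\sharp=\partial_0+\partial_1$ is an eigenvector of $\G$ for a nonzero real eigenvalue; the term $\F g^\sharp$ vanishes, so there is no residual equation coupling $g$ to $G$. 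By Corollary \ref{cor:null} and Lemma \ref{lem:realnull} this eigenvector is necessarily the real null vector we started with, and by Lemma \ref{lem:realeigenvalue} no further nonzero real eigenvalues occur. In particular $\G$ has a nonzero real eigenvalue, so by Theorem \ref{thm:F} (equivalently Lemma \ref{lem:nonzeroreal}) $G$ lies in the spacelike-kernel family and carries no $N$-block; the eigenvalue condition pins the boost coefficient to $1$.

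The first substantive step is to bring $G$ to canonical form without disturbing $\xi$. I would replay the argument of Lemma \ref{lem:nonzeroreal}, but with the null eigenvector $\partial_0+\partial_1$ already in standard position, so that the initial coordinate choice of that lemma is unnecessary. The only transformations then needed are the null rotations (the ``$\beta_j$'' family) that remove the mixed off-diagonal terms, followed by a transverse $O$-rotation that block-diagonalises the spacelike part with $b_2\ge b_4\ge\cdots\ge b_{2r}>0$. The point to check is that both belong to $G_\xi$: they are exactly the little-group elements fixing the null vector $\partial_0+\partial_1$, as opposed to boosts (which would rescale it) or reflections (which would send $\xi\mapsto-dx^0-dx^1$). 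This yields
\begin{align*}
  G=dx^0dx^1+\sum_{i=1}^r b_{2i}\,dx^{2i}dx^{2i+1},
\end{align*}
equivalently $\eta=x^0dx^1-x^1dx^0+\sum_{i=1}^r b_{2i}(x^{2i}dx^{2i+1}-x^{2i+1}dx^{2i})+g$.

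The second step reduces $g$ with the remaining freedom in $G_\xi$, mirroring the $g$-reduction of Lemma \ref{lem:FN}. Under a translation $x'^\mu=x^\mu+c^\mu$ one has $g\mapsto g-Gc$, so $g$ may be shifted by any element of $\operatorname{Im}G=\operatorname{span}(dx^0,dx^1,\dots,dx^{2r+1})$; this annihilates every component of $g$ except those along $dx^{2r+2},\dots,dx^n$. Translations lie in $G_\xi$ because $\xi$ is a constant $1$-form. Finally an $SO(n-2r-1)$ rotation on the coordinates $x^{2r+2},\dots,x^n$, which fixes both $\xi$ and the already-normalised $G$, rotates the surviving part of $g$ to $q\,dx^n$ with $q\ge 0$, giving the two stated forms (the case $r=0$ producing the first). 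The main obstacle is the bookkeeping in the first step: one must verify that every Lorentz transformation used in the reduction of $G$ fixes $\partial_0+\partial_1$ exactly, and in particular that the normalisation of the boost coefficient does not covertly demand a transformation that moves $\xi$.
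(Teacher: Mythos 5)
Your proposal is correct and follows essentially the same route as the paper: the commutation relation reduces to the statement that $\partial_0+\partial_1$ is an eigenvector of $\G$ with eigenvalue $\pm 1$ (forcing $G_{01}=1$ and killing half of the mixed terms), after which the paper, exactly as you propose, replays the reduction of Lemma \ref{lem:nonzeroreal} inside the little group of the null vector — null rotations to remove the remaining $(dx^0-dx^1)dx^k$ terms via invertibility of $(\delta_{k\ell}-G_{k\ell})$, a transverse rotation to block-diagonalise the spacelike part, a translation shifting $g$ by $\operatorname{Im}G$, and a final transverse rotation giving $g=q\,dx^n$. Your attention to verifying that each transformation fixes $\xi$ exactly (excluding boosts) is precisely the point the paper's proof also respects.
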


\begin{proof}
For Killing fields
\begin{align*}
  \xi&=-dx^0+dx^1,\\
    \eta&=G_{\mu\nu}x^\mu dx^\nu+g_\nu dx^\nu,
  \end{align*}
the commutation relation $[\xi^\sharp,\eta^\sharp]=\xi^\sharp$
is equivalent to
  \begin{align*}
    (G_{\mu\nu}+\eta_{\mu\nu})\xi^\nu=G_{\mu 0}+G_{\mu 1}+\eta_{\mu 0}
    +\eta_{\mu 1}=0.
  \end{align*}
This leads to
  \begin{align*}
    G_{01}&=1,\\
    G_{k0}+G_{k1}&=0.&(k=2,3,\dots,n)
  \end{align*}
Hence the 2-form $G$ must be in the form
  \begin{align*}
    G=dx^0dx^1+\sum_{k=2}^n h_k(dx^0-dx^1)dx^k
    +\sum_{2\le i<j\le n}G_{ij}dx^idx^j.
  \end{align*}

By a Lorentz transformation 
  \begin{align*}
    x^0&=\left(1+\dfrac{\beta^2}{2}\right)x'^0-\dfrac{\beta^2}{2}x'^1
    -\sum_{k=2}^n\beta_kx'^k,\\
    x^1&=\dfrac{\beta^2}{2}x'^0+\left(1-\dfrac{\beta^2}{2}\right)x'^1
    -\sum_{k=2}^n\beta_kx'^k,\\
    x^k&=x'^k-\beta_k(x'^0-x'^1),\\
  \beta^2&=\sum_{k=2}(\beta_k)^2,
  \end{align*}
  that leaves  
the form of $\xi$ invariant,
$G$ becomes
  \begin{align*}
    G&=dx'^0dx'^1-\sum_{k=2}^n\left[\sum_{\ell =2}^n(\delta_{k\ell}-G_{k\ell})\beta_\ell-h_k\right]
    (dx'^0-dx'^1)dx'^k\\
    &+\sum_{2\le i<j\le n}G_{ij}dx'^idx'^j.
  \end{align*}
  Noting that  $(\delta_{k\ell}-G_{k\ell})$'s constitute
  a invertible matrix,
  the second term can be made zero
  by appropriately choosing $\beta_\ell$.

  Hence, by an $O_{n-1}$ transformation that preserves
$(x'^2)^2+\dots+(x'^n)^2$, it can be made
into the form
\begin{align*}
  G=dx^0dx^1,
\end{align*}
or
\begin{align*}
G&=dx^0dx^1+\sum_{i=1}^r b_{2i}dx^{2i}dx^{2i+1}.&(b_2\ge b_4\ge\dots\ge b_{2r}>0)
\end{align*}

Then, by a translation, we can made
\begin{align*}
g_0=g_1=\dots=g_{2r+1}=0,
\end{align*}
where the case $G=x^0dx^1$ corresponds to $r=0$. 

By further $O_{n-2r-1}$ transformation
that leaves $(x^{2r+2})^2+\dots+(x^n)^2$ invariant,
we can set
\begin{align*}
g_\nu dx^\nu=q dx^n.
\end{align*}
\end{proof}

Now we completely grasp all the possibilities
for the noncommutative pair of Killing filds from
Lemmas \ref{lem:FGF}, \ref{lem:FN}, \ref{lem:F0-1}, and
\ref{lem:F0-2}, which are summerized as  follows.

\begin{thm}\label{thm:killing-lie}
  Let $\xi$ and $\eta$ be Killing fields those are subject to
  $\left[\xi^\sharp,\eta^\sharp\right]=\xi^\sharp$.
  Then, by a Poincar\'e transformation,
  these can be brought into either of the forms
  \begin{align*}
    \xi&=(x^0-x^2)dx^1-x^1(dx^0-dx^2),\\
    \eta&=x^0dx^2-x^2dx^0
    +\sum_{i=1}^r b_{2i+1}(x^{2i+1}dx^{2i+2}-x^{2i+2}dx^{2i+1})+qdx^n,
  \end{align*}
  or
  \begin{align*}
    \xi&=-dx^0+dx^1,\\
    \eta&=x^0dx^1-x^1dx^0
    +\sum_{i=1}^r b_{2i}(x^{2i}dx^{2i+1}-x^{2i+1}dx^{2i})+qdx^n.
  \end{align*}
\end{thm}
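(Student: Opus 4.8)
The plan is to read the theorem off from the structural facts already in place, organizing everything around the two admissible shapes of $F$. Throughout I assume $\xi\ne 0$, since otherwise the relation $[\xi^\sharp,\eta^\sharp]=\xi^\sharp$ is vacuous and no genuine noncommutative pair arises. First I would invoke Proposition \ref{prop:rankF}, which states that the commutation relation forces $\F$ to be nilpotent. Scanning the canonical list in Theorem \ref{thm:F}, the only nilpotent representatives are case (a), $F=0$, and case (e), $F=dx^0dx^1+dx^1dx^2$. Hence, after a Lorentz transformation we may assume $\xi$ lands in exactly one of these two situations, and the proof splits accordingly.

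In the case $F=dx^0dx^1+dx^1dx^2$ I would first apply Lemma \ref{lem:FGF}: using a Lorentz transformation that preserves $F$, the partner 2-form $G$ is normalized to $dx^0dx^2$ or to $dx^0dx^2+\sum_{i=1}^r b_{2i+1}dx^{2i+1}dx^{2i+2}$. With $F$ and $G$ simultaneously in these forms, Lemma \ref{lem:FN} then supplies a further Poincar\'e transformation—one fixing both $F$ and $G$—after which the translational parts reduce to $f=0$ and $g=q\,dx^n$. Writing $\xi=F_{\mu\nu}x^\mu dx^\nu$ and $\eta=G_{\mu\nu}x^\mu dx^\nu+g$ with these data reproduces the first alternative in the statement; here one only rewrites the homogeneous part $F_{\mu\nu}x^\mu dx^\nu$ as $(x^0-x^2)dx^1-x^1(dx^0-dx^2)$, and likewise for $G$.

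In the case $F=0$, the field $\xi=f_\nu dx^\nu$ is a pure translation. Lemma \ref{lem:F0-1} excludes $\xi$ being timelike or spacelike, because then no admissible $\eta$ exists; hence $\xi$ must be null. A nonzero null covector can be aligned with $-dx^0+dx^1$ by rotating its spatial part and then boosting along the null direction to fix its scale, so after a Poincar\'e transformation $\xi=-dx^0+dx^1$. Lemma \ref{lem:F0-2} then puts $\eta$ into precisely the second alternative of the statement. Collecting the two cases completes the argument.

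I expect the only real friction to be bookkeeping rather than genuine difficulty: one must verify that the successive normalizations are mutually compatible, i.e. that each transformation invoked—first bringing $G$ to canonical form, then clearing $f$ and $g$—preserves what the previous step achieved. This is guaranteed because Lemmas \ref{lem:FGF} and \ref{lem:FN} are phrased in terms of transformations that leave $F$ (and, in the second step, also $G$) invariant, so no normalization is undone. The single genuinely new small step is the reduction of an arbitrary null translation to $-dx^0+dx^1$, which exploits the freedom of boosts to rescale a null direction; this is the point at which I would be most careful.
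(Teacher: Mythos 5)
Your proposal is correct and takes essentially the same route as the paper: the paper offers no separate proof of Theorem~\ref{thm:killing-lie}, presenting it instead as the direct summary of Proposition~\ref{prop:rankF} (nilpotency forcing $F=0$ or $F=dx^0dx^1+dx^1dx^2$) combined with Lemmas~\ref{lem:FGF}, \ref{lem:FN}, \ref{lem:F0-1}, and \ref{lem:F0-2}, assembled exactly as you assemble them. Your explicit remarks---excluding $\xi=0$, and normalizing a nonzero null translation to $-dx^0+dx^1$ via a spatial rotation followed by a boost---merely fill in small steps the paper leaves implicit.
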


\section{Conclusion}
We have shown that the equation of motion for
cohomogeneity-one Nambu-Goto strings in $\min{n}$ is completely integrable
for $n\ge 1$, which generalize the result of Koike et al. for $n=3$.
In order to enumerate the possible types of those equation of motion,
we have classified the canonical form of the Killing vector field
in $\min{n}$ under the action of the Poincar\'e group.
It has been shown that there are 4 types for $n=1$,
7 types for $n=2$, 11 types for $n=3$, 13 types for $n=4$,
and 14 types for $n\ge 5$ of the Killing vector fields (Theorem \ref{thm:killing}).

The equation of motion for a
cohomogeneity-one Nambu-Goto string in $\min{n}$ reduces to
the geodesic equation in the $n$-dimensional Killing orbit space.
The Hamilton-Jacobi equations for these
geodesic equations are generally non-separable.
Hence, at this point, the integrability issue of this system is not self-evident.

By a conformal tric \cite{KKI10}, the geodesic system in the curved orbit space
is converted  into the problem of
particle motions in $\min{n}$ with the
potential function determined by the amplitude of the Killing vector field.
Then, we have found that
all the canonical forms for the Killing vector fields defines a
simple Hamiltonian function for the particle system, enough to
be able to
show that the system is completely integrable.

We have also enumerate the possible forms of the basis of the
2-dimensional Lie algebra of Killing vector fields
subject to $[\xi^\sharp,\eta^\sharp]=\xi^\sharp$.
It has been found that there only 2-types for such pair of
Killing vector fields (Theorem \ref{thm:killing-lie}).
This result would be useful to classify the Nambu-Goto membranes
in $\min{n}$
those respects spacetime isometry
isomorphic to a 2-dimensional subgroup of the Poincar\'e group.


\section*{Acknowledgment}
We are grateful for
the stimulating discussions with Dr. T. Koike 
and Dr. Y. Morisawa during the Singularity Meeting in 2019.

\end{document}